\newtheorem{lemma}{Lemma}
\newtheorem{theorem}{Theorem}
\newtheorem{definition}{Definition}
\newcommand{\ie}{\emph{i.e.,}\xspace}
\newcommand{\etal}{et al.\xspace}
\newcommand{\edge}[1]{(#1)}
\newcommand{\set}[1]{\{#1\}}
\newcommand{\Truss}[1]{T_{#1}\xspace}
\newcommand{\Node}[1]{V_{#1}\xspace}
\newcommand{\Edge}[1]{E_{#1}\xspace}
\newcommand{\Nei}[2]{N(#1,#2)\xspace}
\newcommand{\Deg}[2]{deg(#1,#2)\xspace}
\newcommand{\Core}[1]{C_{#1}\xspace}
\newcommand{\Sup}[2]{sup(#1,#2)\xspace}
\newcommand{\Tdel}[2]{T_{#1}^{#2}\xspace}
\DeclareMathOperator*{\argmax}{arg\,max}
\newcommand\blfootnote[1]{%
  \begingroup
  \renewcommand\thefootnote{}\footnote{#1}%
  \addtocounter{footnote}{-1}%
  \endgroup
}
\title{Critical Edge Identification: A K-Truss Based Model}
\author{
Weijie Zhu$^{1,3}$\and
Mengqi Zhang$^2$\and
Chen Chen$^{1*}$\and
Xiaoyang Wang$^2$\and
Fan Zhang$^4$\and
Xuemin Lin$^{1,3,4}$
\affiliations
$^1$East China Normal University, China\\
$^2$Zhejiang Gongshang University, China\\
$^3$Zhejiang Lab, Hangzhou, China\\
$^4$The University of New South Wales, Australia
\emails
\{weijie.zhu93, mengqiz.zjgsu, fanzhang.cs\}@gmail.com,
\{chenc, xiaoyangw\}@zjgsu.edu.cn,
lxue@cse.unsw.edu.au
}
\begin{document}

\maketitle

\begin{abstract}
In a social network, the strength of relationships between users can significantly affect the stability of the network. In this paper, we use the $k$-truss model to measure the stability of a social network. To identify critical connections, we propose a novel problem, named $k$-truss minimization. Given a social network $G$ and a budget $b$, it aims to find $b$ edges for deletion which can lead to the maximum number of edge breaks in the $k$-truss of $G$. We show that the problem is NP-hard. To accelerate the computation, novel pruning rules are developed to reduce the candidate size. In addition, we propose an upper bound based strategy to further reduce the searching space. Comprehensive experiments are conducted over real  social networks to demonstrate the efficiency and effectiveness of the proposed techniques.\blfootnote{*Corresponding author}
\end{abstract}

\section{Introduction}
\label{sec:intro}

As a key problem in graph theory and social network analysis, the mining of cohesive subgraphs, such as $k$-core, $k$-truss, clique, etc, has found many important applications in real life~\cite{cohen2008trusses,tsourakakis2013denser,DBLP:conf/icde/WenQZLY16,yu2013more}.
The mined cohesive subgraph can serve as an important metric to evaluate the properties of a network, such as network engagement.
In this paper, we use the $k$-truss model to measure the cohesiveness of a social network.
Unlike $k$-core, $k$-truss not only emphasizes the users' engaged activities (\ie number of friends), but also requires strong connections among users. That is, the $k$-truss of $G$ is the maximal subgraph where each edge is involved in at least $k-2$ triangles.
Note that triangle is an important building block for the analysis of social network structure~\cite{xiao2017asymptotic,cui2018efficient}.
Thus the number of edges in the $k$-truss can be utilized to measure the stability of network structure.

\begin{figure}[htb]
\centering
\includegraphics[width=0.6\linewidth]{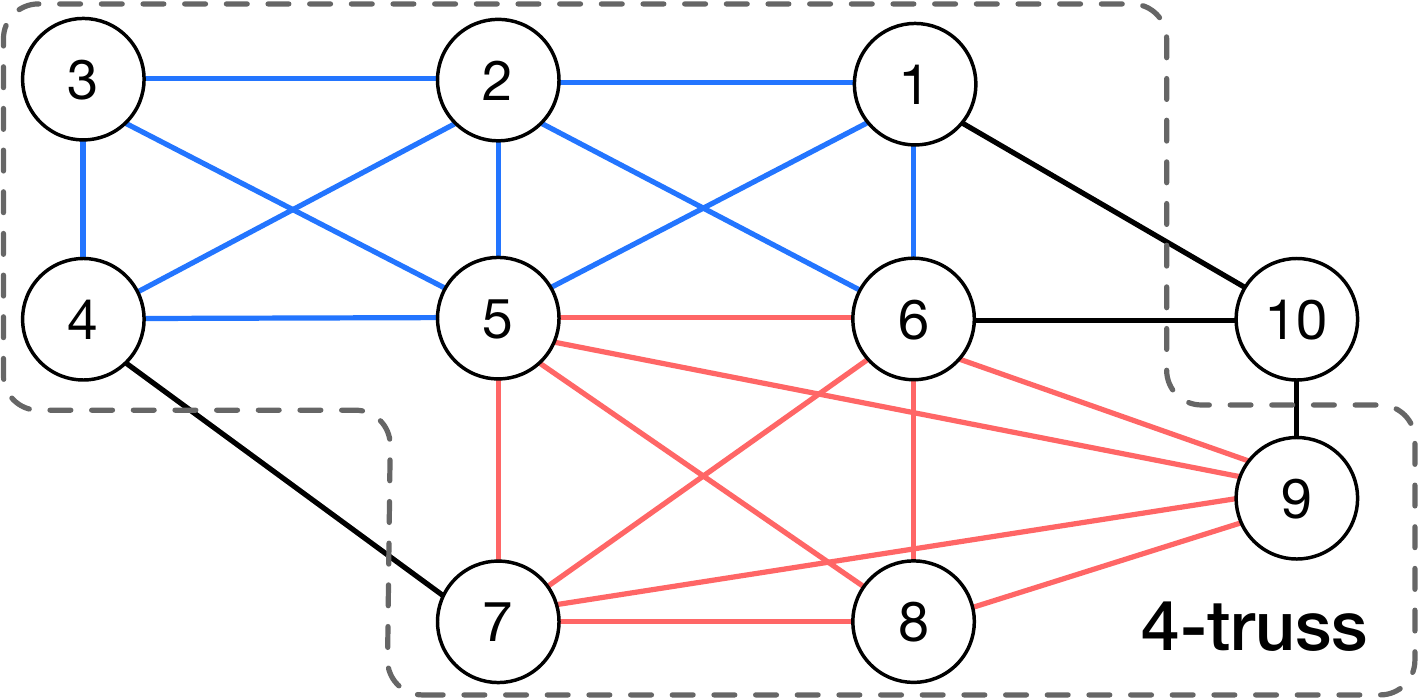}
\caption{Motivation Example}
\label{fig:moti}
\end{figure}

The breakdown of a strong connection may affect other relationships, which can make certain relationships involved in less than $k-2$ triangles and removed from the $k$-truss. Hence, it will lead to a cascading breakdown of relationships eventually. To identify the critical edges, in this paper, we investigate the $k$-truss minimization problem. Given a social network $G$ and a budget $b$, $k$-truss minimization aims to find a set $B$ of $b$ edges, which will result in the largest number of edge breaks in the $k$-truss by deleting $B$.

Figure~\ref{fig:moti} is a toy social network with 10 users. Suppose $k$ is 4. Then only the blue and red edges belong to the $4$-truss. If we delete edge $\edge{v_2, v_5}$, it will affect the connections
among other users and lead to the removal of all the blue edges from the $4$-truss, since they no longer meet the requirement of $4$-truss. We can see that the deletion of one single edge can seriously collapse the social network.
The $k$-truss minimization problem can find many applications in real life. For instance,
given a social network, we can reinforce the community by paying more attention to the critical relationships. Also, we can strengthen the important connections to enhance the stability
of a communication network or detect vital connections in enemy's network for military purpose.

The main challenges of this problem lie in the following two aspects. Firstly, we prove that the problem is NP-hard. It means that it is non-trivial to obtain the result in polynomial time. Secondly, the number of edges in a social network is usually quite large. Even if we only need to consider the edges in $k$-truss as candidates, it is still a large amount of edges to explore. To the best of our knowledge, we are the first to investigate the $k$-truss minimization problem through edge deletion.
We formally define the problem and prove its hardness. Novel pruning rules are developed to reduce the searching space. To further speed up the computation, an upper bound based strategy is proposed. 

\section{Preliminaries}
\label{sec:preli}

\subsection{Problem Definition}
\label{sec:definition}

We consider a social network $G$ as an undirected graph. Given a subgraph $S \subseteq G$, we use
$\Node{S}$ (resp. $\Edge{S}$) to denote the set of nodes (resp. edges) in $S$. $\Nei{u}{S}$ is
the neighbors of $u$ in $S$. $\Deg{u}{S}$ equals $|\Nei{u}{S}|$, denoting the degree of $u$ in $S$. $m = |\Edge{G}|$ is the number of edges in $G$.
Assuming the length of each edge equals 1, a triangle is a cycle of length 3 in the graph.
For $e \in \Edge{G}$, a \textbf{containing-\emph{e}-triangle} is a triangle which contains $e$.

\begin{definition}[\textbf{$k$-core}]
Given a graph $G$, a subgraph $S$ is the k-core of $G$, denoted as $\Core{k}$, if (i) $S$ satisfies degree constraint, i.e., $\Deg{u}{S} \geq k$ for every $u \in \Node{S}$; and (ii) $S$ is maximal, i.e., any supergraph of $S$ cannot be a k-core.
\end{definition}

\begin{definition}[\textbf{edge support}]
Given a subgraph $S \subseteq G$ and an edge $e \in \Edge{S}$, the edge support of $e$ is the number of containing-e-triangles in $S$, denoted as $\Sup{e}{S}$.
\end{definition}

\begin{definition}[\textbf{$k$-truss}]
Given a graph $G$, a subgraph $S$ is the $k$-truss of $G$, denoted by $\Truss{k}$, if (i) $\Sup{e}{S} \geq k-2$ for every edge $e \in \Edge{S}$; (ii) $S$ is maximal, i.e., any supergraph of $S$ cannot be a $k$-truss; and (iii) $S$ is non-trivial, i.e., no isolated node in $S$.
\end{definition}

\begin{definition}[\textbf{trussness}]
The trussness of an edge $e \in \Edge{G}$, denoted as $\tau(e)$, is the largest integer $k$  that satisfies $e \in \Edge{\Truss{k}}$ and $e \notin \Edge{\Truss{k+1}}$.
\end{definition}

Based on the definitions of $k$-core and $k$-truss, we can see that $k$-truss not only requires sufficient number of neighbors, but also has strict constraint over the strength of edges.
A $k$-truss is at least a ($k$-1)-core. Therefore, to compute the $k$-truss, we can first compute the ($k$-1)-core and then find the $k$-truss over ($k$-1)-core by iteratively removing all the edges that violate the $k$-truss constraint. The time complexity is $O(m^{1.5})$~\cite{Wang:2012:TDM:2311906.2311909}.
Given a set $B$ of edges in $G$, we use $\Tdel{k}{B}$ to denote the $k$-truss after deleting $B$. We use $|\Tdel{k}{B}|$ to denote the number of edges in $\Tdel{k}{B}$. We define the followers $F(B,\Truss{k})$ of $B$ as the edges that are removed from $\Truss{k}$ due to the deletion of $B$. Then our problem can be formally defined as follows.

Given a graph $G$ and a budget $b$, the \textbf{$k$-truss minimization problem} aims to find a set $B^*$ of $b$ edges, such that the $|\Tdel{k}{B^*}|$ is minimized. It is also equivalent to finding an edge set $B^*$ that can maximize $|F(B^*,\Truss{k})|$, \ie
\begin{equation*}
B^* = \argmax_{B \subseteq \Edge{G} \wedge  |B| = b}|F(B,\Truss{k})|.
\end{equation*}

\begin{figure}[t]
\begin{center}
 \subfigure[\small{Constructed Example for NP-hard Proof}]{
    \includegraphics[width=1\columnwidth]{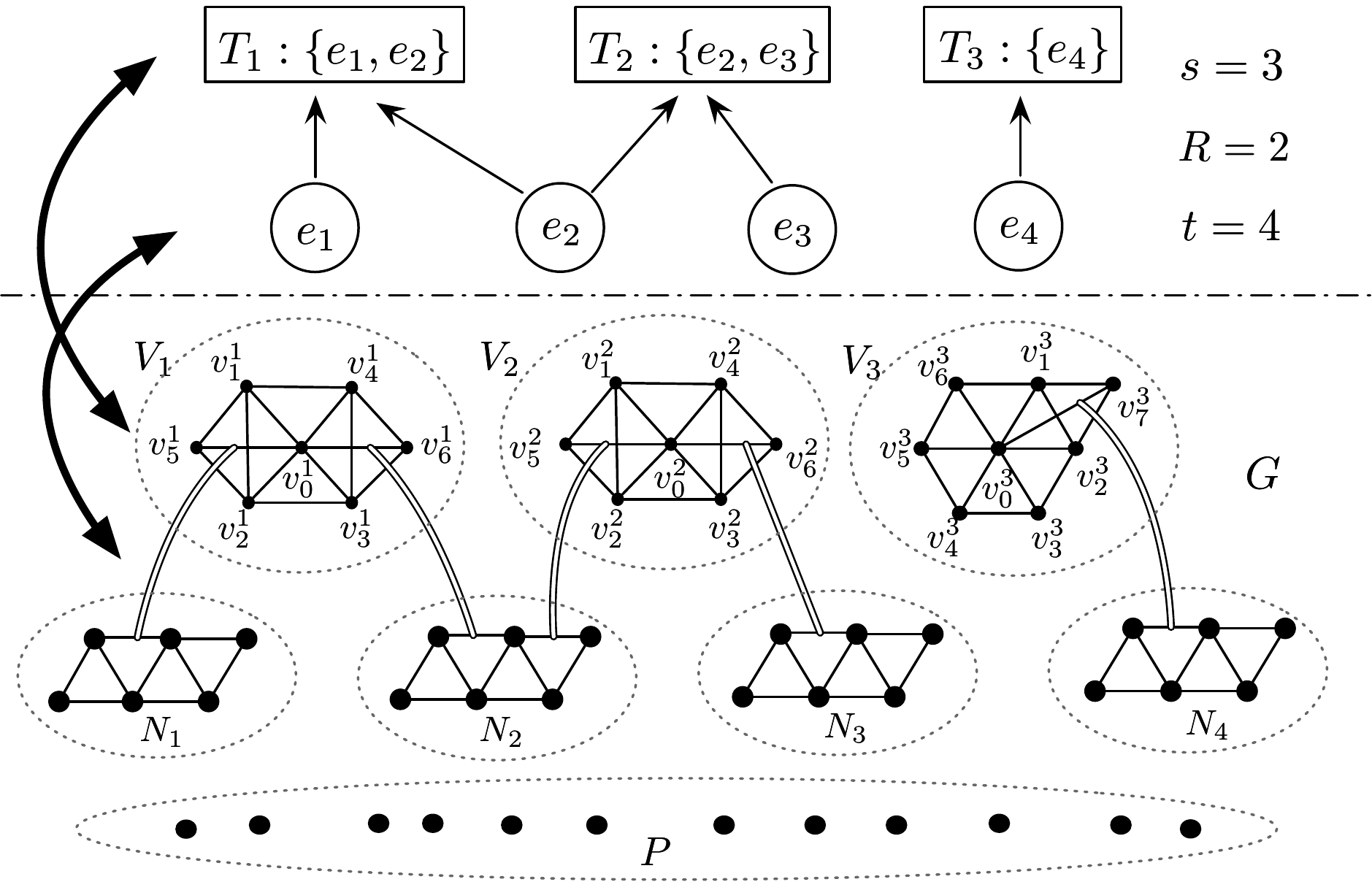}
    \label{fig:np}
  }

  \subfigure[\small{Structure Illustration}]{
    \includegraphics[width=0.46\columnwidth]{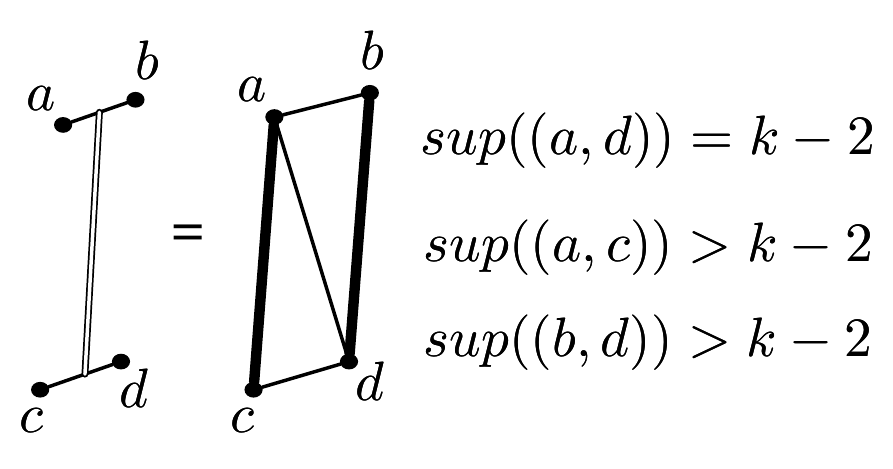}
    \label{fig:note}
  }
  \subfigure[\small{Construction of $V$ for $R = 3$}]{
    \includegraphics[width=0.46\columnwidth]{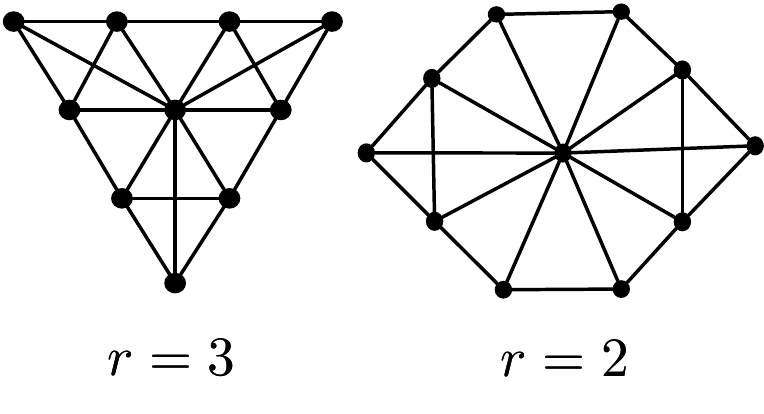}
    \label{fig:construction_example}
  }
\end{center}
\caption{Example for NP-hard}
\label{fig:np_np}
\end{figure}

According to Theorem~\ref{th:nphard} and~\ref{th:nonsub}, the $k$-truss minimization problem is NP-hard for $k \geq 5$, and the objective function is monotonic but not submodular.

\begin{theorem}
\label{th:nphard}
For $k \geq 5$, the $k$-truss minimization problem is NP-hard.
\end{theorem}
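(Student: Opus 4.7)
The plan is to establish NP-hardness by polynomial-time reduction from a well-known NP-hard problem, most naturally the decision version of Maximum Coverage: given a universe $U$, a family $\mathcal{S} = \{S_1,\dots,S_m\}$ of subsets of $U$, and integers $p$ and $t$, decide whether there exist $p$ sets whose union has size at least $t$. The figure accompanying the theorem, and in particular its ``construction of $V$ for $R=3$'' panel, is consistent with such a reduction tuned to the case $k - 2 = 3$, \ie $k = 5$.

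From a Maximum Coverage instance I would build a graph $G$ and set the budget $b := p$ using two classes of gadget. A \emph{set-gadget} for each $S_i$ contains a designated trigger edge $e_i$ that sits in $\Truss{k}$. An \emph{element-gadget} $H_u$ for each $u \in U$ is a subgraph entirely inside $\Truss{k}$, whose edges reach their required support of $k-2$ by relying on triangles formed jointly with the trigger edges of the sets $S_i$ that contain $u$. Additional scaffolding ensures that every non-trigger edge is either excluded from $\Truss{k}$ or so heavily over-supported that its removal yields no followers, forcing every optimal solution $B^*$ to be a subset of $\{e_1,\dots,e_m\}$ of size $p$. The wiring is then set so that deleting $\{e_i : i \in I\}$ causes a cascade destroying exactly the element-gadgets $H_u$ with $u \in \bigcup_{i \in I} S_i$, plus a fixed per-trigger amount that can be absorbed into an affine correction. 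Hence maximizing $|F(B,\Truss{k})|$ under $|B|=b$ is equivalent, up to an additive constant, to maximizing $|\bigcup_{i \in I} S_i|$ over $|I| \le p$, and the equivalence of decision instances follows.

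The step I expect to be most delicate is designing the element-gadget $H_u$. Because $u$ may belong to many different sets, $H_u$ must break as soon as \emph{any} relevant trigger is deleted --- \ie be simultaneously fragile with respect to each such trigger --- while remaining safely in the $k$-truss before any deletion. A natural device is to give $H_u$ one ``critical'' edge per set $S_i \ni u$, arranging its support of exactly $k-2$ to be provided by triangles involving $e_i$; the loss of $e_i$ then drops that critical edge's support below the threshold, and a secondary truss-decomposition cascade drags the remainder of $H_u$ out of $\Truss{k}$. Getting the triangle counts exactly right without accidentally creating extra triangles that over-support the critical edges, and isolating the cascades so that no two element-gadgets interfere, is the bookkeeping at the heart of the proof. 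The hypothesis $k \ge 5$ enters here: with $k-2 \ge 3$ there is enough room inside a small bundle of triangles to encode set membership while keeping the total construction polynomial in $|U| + |\mathcal{S}|$.
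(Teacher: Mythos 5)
Your proposal matches the paper's proof in all essentials: both reduce from maximum coverage, build one fragile set-gadget per set whose deletion triggers a cascade, one element-gadget per element that collapses exactly when some selected set covers it, pad all supports to exactly $k-2$ via a dense auxiliary subgraph so that only set-gadget edges are worthwhile candidates, and normalize so each set-gadget contributes the same fixed follower count (your ``affine correction''). The only thing the paper supplies that you defer is the explicit realization of the gadgets (a centered $(4R-2r_i)$-gon per set plus triangle strips per element), which is precisely the ``delicate bookkeeping'' you correctly identify as the remaining work.
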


\begin{proof}
For $k \geq 5$, we sketch the proof for $k=5$. A similar construction can be applied for the case of $k >5$.
When $k = 5$, we reduce the $k$-truss minimization problem from the maximum coverage problem~\cite{DBLP:conf/coco/Karp72}, which aims to find $b$ sets to cover the largest number of elements, where $b$ is a given budget. We consider an instance of maximum coverage problem with $s$ sets $T_{1}$, $T_{2}$, .., $T_{s}$ and $t$ elements $\{e_{1}, .., e_{t}\}$ = $\cup_{1 \leq i \leq s}T_{i}$. We assume that the maximum number of elements inside $T$ is $R \leq t$. Then we construct a corresponding instance of the $k$-truss minimization problem in a graph $G$ as follows. Figure~\ref{fig:np} is a constructed example for $s=3,t=4,R=2$.

We divide $G$ into three parts, $V$, $N$ and $P$. 1) $V$ consists of $s$ parts. Each part $V_i$ corresponds to $T_{i}$ in the maximum coverage problem instance. 2) $N$ consists of $t$ parts. Each part $N_i$ corresponds to $e_i$ in the maximum coverage problem instance. 3) $P$ is a dense subgraph. The support of edges in $P$ is no less than $k - 2 + b$. Specifically, suppose $T_i$ consists of $r_i \leq R$ elements, $V_i$ consists of $4R - r_i +1$ nodes and $8R - r_i$ edges. To construct $V_i$, we first construct a $(4R-2r_i)$-polygon. Then, we add a node $v^i_0$ in the center of $(4R-2r_i)$-polygon and add $4R-2r_i$ edges between $v^i_0$ and $v^i_1, ..., v^i_{4R-2r_i}$. Finally, we further add $r_i$ nodes $v^i_{4R-2r_i+1}, ..., v^i_{4R-r_i}$ and $3r_i$ edges $\{(v^i_{0}, v^i_{4R-2r_i+1}), (v^i_{1}, v^i_{4R-2r_i+1}), (v^i_{2}, v^i_{4R-2r_i+1}),..., (v^i_{0},$ $v^i_{4R-r_i}), (v^i_{2r_i - 1}, v^i_{4R-r_i}), (v^i_{2r_i}, v^i_{4R-r_i})\}$. With the construction, the edges in $V$ have support no larger than 3. We use $P$ to provide support for edges in $V$ and make the support of edges in $V$ to be 3. Each part in $N$ consists of $2R+2$ nodes and the structure is a list of $4R$ triangles which is shown in Figure~\ref{fig:np}. For each element $e_i$ in $T_j$, we add two triangles between $N_i$ and $V_j$ to make them triangle connected. The structure is shown in ~\ref{fig:note}. Note that each edge in $N_i$ and $V_j$ can be used at most once. We can see that edges in $N$ have support no larger than 3. Finally, we use $P$ to provide support for edges in $N$ and make the support of edges in $N$ to be 3. Then the construction is completed.
The construction of $V_i$ for $R = 3$ is shown in Figure~\ref{fig:construction_example}.

With the construction, we can guarantee that 1) deleting any edge in $V_i$ can make all the edges in $V_i$ and the edges in $N_j$ who have connections with $V_i$ deleted from the truss. 2) Only the edges in $V_i$ can be considered as candidates. 3) Except the followers in $N$, each $V_i$ has the same number of followers. In Figure~\ref{fig:np}, deletion of each $V_i$ can make $8R$ edges (except the edges in $N$) removed. Consequently, the optimal solution of $k$-truss minimization problem is the same as the maximum coverage problem. Since the maximum coverage problem is NP-hard, the theorem holds.
\end{proof}

\begin{theorem}
\label{th:nonsub}
The objective function $f(x) = |F(x,\Truss{k})|$ is monotonic but not submodular.
\end{theorem}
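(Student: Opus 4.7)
The plan is to prove the two claims independently: monotonicity via a coupling argument on the iterative peeling that defines the $k$-truss, and non-submodularity via an explicit small counterexample built around an edge whose support sits just above the trussness threshold.

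For monotonicity, suppose $A \subseteq B \subseteq \Edge{G}$ and recall that the truss after deleting any edge set $X$ can be computed by starting from $G \setminus X$ and iteratively peeling off every edge whose support has fallen below $k-2$. Let $P_A^{(i)}$ and $P_B^{(i)}$ denote the surviving edge sets after round $i$ of this peeling in the two processes. I would prove $P_B^{(i)} \subseteq P_A^{(i)}$ by induction on $i$. The base case holds because $A \subseteq B$ gives $\Edge{G}\setminus B \subseteq \Edge{G}\setminus A$. For the inductive step, any edge $e$ surviving round $i{+}1$ in the $B$-process has $\Sup{e}{P_B^{(i)}} \geq k-2$; by the inductive containment $P_B^{(i)} \subseteq P_A^{(i)}$, its support in $P_A^{(i)}$ is no smaller, so $e$ also survives round $i{+}1$ in the $A$-process. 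Taking the limit yields $\Tdel{k}{B} \subseteq \Tdel{k}{A}$, hence $F(A,\Truss{k}) = \Truss{k}\setminus \Tdel{k}{A} \subseteq \Truss{k}\setminus \Tdel{k}{B} = F(B,\Truss{k})$, and therefore $f(A) \leq f(B)$.

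For non-submodularity, the target is to exhibit $A \subseteq B$ and an edge $e \notin B$ with
\begin{equation*}
f(A \cup \{e\}) - f(A) \;<\; f(B \cup \{e\}) - f(B).
\end{equation*}
I would engineer a small gadget containing an intermediate edge $h \in \Truss{k}$ whose support in $\Truss{k}$ is exactly $k-1$, i.e., just one above the threshold, such that two of its containing triangles are ``fragile'': one uses an auxiliary edge $e_1$ and the other uses an auxiliary edge $e_2$, the two triangles being edge-disjoint. Deleting $e_1$ alone reduces the support of $h$ to exactly $k-2$, so $h$ survives the peeling, and the induced cascade is confined to a neighborhood of $e_1$ that does not involve $h$; symmetrically for $e_2$. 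Deleting $e_1$ and $e_2$ together, however, drops the support of $h$ to $k-3 < k-2$, expelling $h$ from the truss and triggering an additional cascade across every edge whose trussness relies on $h$. Setting $A = \emptyset$, $B = \{e_1\}$, and $e = e_2$ then yields the strict inequality above.

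The main obstacle is the combinatorial design of the gadget: the support of $h$ must land exactly on $k-1$ with exactly two fragile triangles sharing no other edges, every other edge of the construction must still meet its own $k-2$ support requirement before any deletion, and the structure hanging off $h$ must be rich enough that $h$'s removal triggers a cascade strictly larger than the sum of the $e_1$- and $e_2$-cascades. Gluing a few copies of $K_k$ along $h$ supplies the extra edges that collapse once $h$ is gone, while a dense padding subgraph analogous to the set $P$ in the proof of Theorem~\ref{th:nphard} can be used to top up the support of any ancillary edges that would otherwise fall below threshold.
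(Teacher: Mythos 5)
Your monotonicity argument is correct and in fact more careful than what the paper records: the paper simply asserts that every follower of $B$ remains a follower of $B'$, whereas your round-by-round induction on the synchronized peeling process (using that edge support is monotone under subgraph containment) actually justifies the containment $\Tdel{k}{B} \subseteq \Tdel{k}{A}$. Your choice of the diminishing-returns characterization of submodularity is also fine; it is equivalent to the $f(A\cup B)+f(A\cap B)\le f(A)+f(B)$ form the paper uses, and the underlying mechanism you identify --- two edges whose individual deletions each leave a pivotal edge $h$ at exactly the support threshold, while their joint deletion pushes $h$ below it and triggers a super-additive cascade --- is precisely what drives the paper's counterexample.

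The genuine gap is that for non-submodularity you never actually exhibit a counterexample: you describe desiderata for a gadget and explicitly flag its construction as ``the main obstacle.'' A disproof of submodularity is only complete once a concrete instance is produced and its cascades are verified, and several of your design constraints (that the $e_1$-cascade avoids $h$, that the two fragile triangles are the only interaction between $e_1$ and $e_2$, that all padding edges independently meet the $k-2$ support requirement) are exactly the parts that need checking. The paper closes this gap trivially by pointing at its Figure~\ref{fig:moti} toy graph with $k=4$, $A=\{(v_5,v_6)\}$, $B=\{(v_7,v_8)\}$, where $f(A)=4$, $f(B)=0$, $f(A\cup B)=12$, $f(A\cap B)=0$. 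An even cleaner instantiation of your own idea, requiring no padding subgraph at all, is a single $(k+1)$-clique with two vertex-disjoint edges $e_1,e_2$: every edge of $K_{k+1}$ has support $k-1$, deleting either edge alone drops no other edge below $k-2$ (so $f(\{e_i\})=0$), but deleting both drops the four ``crossing'' edges to support $k-3$ and the whole clique unravels, giving $f(\{e_1,e_2\})=\binom{k+1}{2}-2>0$. Replacing your unfinished gadget with either of these finishes the proof.
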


\begin{proof}
Suppose $B \subseteq B'$. For every edge $e$ in $F(B,\Truss{k})$, $e$ will be deleted from the $k$-truss when deleting $B'$. Thus $f(B) \leq f(B')$ and $f$ is monotonic.
Given two sets $A$ and $B$, if $f$ is submodular, it must hold that $f(A \cup B) + f(A \cap B) \leq f(A) + f(B)$. We show that the inequality does not hold by constructing a counter example. In Figure~\ref{fig:moti},
for $k=4$, suppose $A=\{(v_{5},v_{6})\}$ and $B=\{(v_{7},v_{8})\}$. We have $f(A)=4$, $f(B)=0$, $f(A \cup B)=12$ and $f(A \cap B)=0$.
The inequation does not hold. $f$ is not submodular.
\end{proof}

\begin{algorithm}[t]
{
\Input{$G$: a social network, $k$: truss constraint, $b$: the budget}
\Output{$B$: the set of deleted edges}
\SetArgSty{text}
\SetFuncSty{text}
\caption{Baseline Algorithm}
\label{alg:greedy}
\State{$B \leftarrow \emptyset$ ; $\Truss{k} \leftarrow k$-truss of $G$}
\While{$|B| < b$}
{

    \State{$e^* \gets \argmax_{e \in \Edge{\Truss{k}}} |F(e, \Truss{k})|$ }
    \State{delete $e^*$ from $\Truss{k}$ and update $\Truss{k}$ }
    \State{$B \gets B \cup \set{e^\ast}$ }
}
\Return{$B$ }
}
\end{algorithm}

\subsection{Baseline Algorithm}
\label{sec:gen_framework}

For the \emph{k}-truss minimization problem, a naive solution is to enumerate all the possible edge sets of size $b$, and return the best one. However, the size of a real-world social network is usually very large. The number of combinations is enormous to enumerate. Due to the complexity and non-submodular property of the problem, we resort to the greedy framework.
Algorithm~\ref{alg:greedy} shows the baseline greedy algorithm.
It is easy to verify that we only need to consider the edges in the $k$-truss as candidates.
The algorithm iteratively finds the edge with the largest number of followers in the current $k$-truss (Line 3). The algorithm terminates when $b$ edges are found. The time complexity of the baseline algorithm is $O(bm^{2.5})$.

\begin{algorithm}[!th]
\SetAlgoVlined
\Input{$G$: a social network, $k$: truss constraint, $b$: the budget}
\Output{$B$: the set of deleted edges}
\SetArgSty{text}
\SetFuncSty{text}
\caption{Group based Algorithm}
\label{alg:improved}
\StateCmt{$B \gets \emptyset$ ; $\Truss{k} \gets$ $k$-truss of $G$ }{compute $k$-truss}
\While{$|B| < b$}
{
    \State{mark all edges in $\Truss{k}$ as \textit{unvisited} }
    \StateCmt{$T \gets$ FindGroup ($\Truss{k}$) }{Line 12-19}
    \For{each $e$ in $T$}
    {
        \State{compute $F(e, \Truss{k})$ }
        \State{$T \leftarrow T \backslash F(e, \Truss{k})$ }
    }
    \State{$e^\ast \gets$ the edge with the most followers }
    \State{update $k$-truss $\Truss{k}$ }
    \State{$B\leftarrow B \cup \set{e^\ast}$ }
}
\Return{$B$ }

\SetKwFunction{FMain}{FindGroup }
    \SetKwProg{Fn}{Function}{:}{}
    \Fn{\FMain{$S$}}{
        \StateCmt{$C \leftarrow \emptyset$ ; gID  $\leftarrow 0$}{$C$ stores the candidates}
        \For{each $e \in S$ }
        {\If{$sup(e, S)=k-2$ and $e$ is \textit{unvisited}}
            {
                \StateCmt{GroupExpansion ($S, e$)}{Line 20-32}
                \State{gID$++$ }
            }

        }
        \textbf{return} $C$
     }
\textbf{End Function}

\SetKwFunction{FMain}{GroupExpansion }
    \SetKwProg{Fn}{Function}{:}{}
    \Fn{\FMain{$S, e$}}{
        \State{$Q \gets \emptyset$ $; Q.enqueue(e)$; mark $e$ as \textit{visited} }
        \While{$Q \neq \emptyset$}
        {
            \State{$e'(u, v) \leftarrow Q.dequeue()$}
            \ForEach{$a \in N(u,\Truss{k}) \cap N(v,\Truss{k})$}
            {
                    \If{$\edge{u, a}$ is \textit{unvisited} and $sup= k-2$ }
                    {
                        \State{$Q.enqueue((u, a))$ }
                        \State{mark $\edge{u,a}$ as \textit{visited} }
                    }
                    \If{$(v, a)$ is \textit{unvisited} and $sup= k-2$ }
                    {
                        \State{$Q.enqueue((v, a))$ }
                        \State{mark $\edge{v,a}$ as \textit{visited} }
                    }
                    \State{update $C$ }

                }
        }
     }
\textbf{End Function}
\end{algorithm}

\section{Group Based Solution}
\label{sec:framework}

In this section, novel pruning techniques are developed to accelerate the search in baseline algorithm.

\subsection{Candidate Reduction}

Before introducing the pruning rules, we first present some definitions involved.

\begin{definition}[\textbf{triangle adjacency}]
Given two triangles $\triangle_1,$ $\triangle_2$ in $G$, they are triangle adjacent if $\triangle_1$ and $\triangle_2$ share a common edge, which means $\triangle_1 \cap \triangle_2 \neq \varnothing$.
\end{definition}

\begin{definition} [\textbf{triangle connectivity}]
Given two triangles $\triangle_s, \triangle_t$ in $G$, they are triangle connected, denoted as $\triangle_s \leftrightarrow \triangle_t$, if there exists a sequence of $\theta$ triangles $\triangle_1, \triangle_2, ..., \triangle_{\theta}$ in $G$, such that $\triangle_s = \triangle_1, \triangle_t = \triangle_{\theta}$, and for $1 \leq i < \theta$, $\triangle_i$ and $\triangle_{i+1}$ are triangle adjacent.
\end{definition}

For two edges $e$ and $e'$, we say they are \textbf{triangle adjacent},
if $e$ and $e'$ belong to the same triangle.
As shown in the baseline algorithm, we only need to consider the edges in $\Truss{k}$ as candidates.
Lemma~\ref{le:rule1} shows that we only need to explore the edges in $Q$.

\begin{lemma}\label{le:rule1}
Given a $k$-truss $\Truss{k}$, let $P = \{e~|~\Sup{e}{\Truss{k}}$ $= k - 2\}$. If an edge $e$ has at least one follower, $e$ must be in $Q$, where $Q = \{e~|~e \in \Truss{k} \wedge \exists e' \in P$ where $e$ and $e'$ are triangle adjacent$\}$.
\end{lemma}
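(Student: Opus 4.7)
The plan is to prove the contrapositive: if $e \notin Q$, then $F(e,\Truss{k}) = \varnothing$. Fix an edge $e = (u,v) \in \Edge{\Truss{k}}$ that is not triangle adjacent to any edge in $P$. By definition, this means that for every containing-$e$-triangle $(u,v,w)$ in $\Truss{k}$, neither $(u,w)$ nor $(v,w)$ belongs to $P$, and hence each of them has support at least $k-1$ in $\Truss{k}$.

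Next I would track what happens to edge supports when $e$ is deleted. The only triangles of $\Truss{k}$ that cease to exist are precisely the containing-$e$-triangles. Therefore the support of any edge $e' \neq e$ decreases by at most the number of triangles it shared with $e$, and this decrease is zero unless $e'$ is one of the two partner edges $(u,w), (v,w)$ in some containing-$e$-triangle. For those partner edges, each loses exactly one unit of support per shared triangle with $e$; but since their original support is at least $k-1$, their support in $\Truss{k} \setminus \{e\}$ is still at least $k-2$. Hence every edge other than $e$ still meets the $k$-truss support constraint immediately after deleting $e$.

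Because no edge's support drops below $k-2$, no further edge needs to be removed in the cascading update used to compute $\Tdel{k}{\{e\}}$. Consequently, $\Tdel{k}{\{e\}} = \Truss{k} \setminus \{e\}$ (up to isolated nodes), which gives $F(e,\Truss{k}) = \varnothing$, as desired. Taking the contrapositive yields the claim of the lemma.

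The only subtle point, and the main step to argue carefully, is justifying that the cascading removal terminates immediately: a single edge deletion can in general trigger a chain of removals, and we must explain that the chain starts only when some edge's support falls strictly below $k-2$, which is exactly what our triangle-adjacency assumption rules out. Everything else is a direct inspection of how triangle counts change under the removal of a single edge.
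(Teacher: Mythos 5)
Your proof is correct and follows essentially the same route as the paper's: both argue the contrapositive by observing that every edge triangle-adjacent to $e$ must have support at least $k-1$, so deleting $e$ drops each such support by exactly one (two edges share at most one triangle) and leaves everything at or above $k-2$, so no cascade ever starts. The only (trivial) case you omit is $e \notin \Edge{\Truss{k}}$, which is also outside $Q$ and which the paper dispatches in one line, since an edge outside the $k$-truss cannot affect $\Truss{k}$ when removed.
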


\begin{proof}
We prove the lemma by showing that edges in $\Edge{G} \setminus Q$ do not have followers. We divide $\Edge{G} \setminus Q$ into two sets. 1) For edge with trussness less than $k$, it will be deleted during the $k$-truss computation. 2) For an edge $e$ in $\Truss{k}$, if $e$ is not triangle adjacent with
any edge in $P$, it means $e$ is triangle adjacent with edges such as $e'$ whose $\Sup{e'}{\Truss{k}} > k-2$. If we delete $e$, all the edges triangle adjacent with $e$ will still have support
 at least $k-2$ in $\Truss{k}$. Thus, $e$ has no follower. The lemma is correct.
\end{proof}

Based on Lemma~\ref{le:rule2}, we can skip the edges that are the followers of the explored ones.

\begin{lemma}\label{le:rule2}
Given two edges $e_1, e_2 \in \Truss{k}$, if $e_1 \in F(e_2, \Truss{k})$, then we have $F(e_1, \Truss{k}) \subseteq F(e_2, \Truss{k})$.
\end{lemma}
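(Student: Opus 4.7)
The plan is to reformulate the claim in terms of the surviving $k$-truss after a single deletion, and then invoke a monotonicity property of the truss-trimming operator. Define, for any edge $e \in \Edge{\Truss{k}}$, the residual truss $T(e) := \Tdel{k}{\{e\}}$, so that $F(e, \Truss{k}) = \Edge{\Truss{k}} \setminus \Edge{T(e)}$ (up to whether we include $e$ itself, which is irrelevant to set containment here). Then $F(e_1,\Truss{k}) \subseteq F(e_2,\Truss{k})$ is equivalent to $\Edge{T(e_2)} \subseteq \Edge{T(e_1)}$, and this is what I would aim to prove.

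First I would establish the following monotonicity statement: if $H_1 \subseteq H_2 \subseteq \Truss{k}$ are two subgraphs, then applying the truss-trimming procedure (iteratively delete every edge whose current support is less than $k-2$) to each yields $\mathrm{trim}_k(H_1) \subseteq \mathrm{trim}_k(H_2)$. The justification is the standard maximality argument for cohesive substructures: $\mathrm{trim}_k(H_1)$ is itself a valid $k$-truss (every edge inside has support $\geq k-2$), and since $\mathrm{trim}_k(H_1) \subseteq H_1 \subseteq H_2$ it sits inside $H_2$ as a $k$-truss subgraph; but $\mathrm{trim}_k(H_2)$ is the unique maximal $k$-truss of $H_2$, so $\mathrm{trim}_k(H_1) \subseteq \mathrm{trim}_k(H_2)$.

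Next I would apply this to the two candidate deletions. The hypothesis $e_1 \in F(e_2,\Truss{k})$ says precisely that $e_1 \notin \Edge{T(e_2)}$, and of course $e_2 \notin \Edge{T(e_2)}$ either. Hence $T(e_2) \subseteq \Truss{k} \setminus \{e_1, e_2\} \subseteq \Truss{k} \setminus \{e_1\}$. Since $T(e_2)$ is a $k$-truss (by definition of trimming) that is contained in the graph $\Truss{k} \setminus \{e_1\}$, and since $T(e_1) = \mathrm{trim}_k(\Truss{k} \setminus \{e_1\})$ is the maximum $k$-truss subgraph of that graph, we conclude $T(e_2) \subseteq T(e_1)$. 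Complementing inside $\Edge{\Truss{k}}$ gives $F(e_1,\Truss{k}) \subseteq F(e_2,\Truss{k})$, as desired.

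The main conceptual hurdle is the monotonicity lemma for trim, which is the only nontrivial ingredient; once that is in hand, everything else is a direct unpacking of definitions. An alternative, more operational proof would simulate the cascade: run the peeling that realises $T(e_2)$, and observe that just before $e_1$ is peeled, the current graph is a subgraph of $\Truss{k} \setminus \{e_1\}$ with $e_1$ having support below $k-2$; from this moment onward the peeling schedule can be continued to mimic the cascade triggered by deleting $e_1$ alone, so every edge in $F(e_1,\Truss{k})$ must also be peeled away in $T(e_2)$'s cascade. I would prefer the trim-monotonicity proof since it is shorter and avoids reasoning about orderings of removals.
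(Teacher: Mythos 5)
Your proof is correct, and it is more careful than the one in the paper. The paper's proof is a two-sentence cascade argument: since $e_1$ is already removed when $e_2$ is deleted, everything triggered by removing $e_1$ is also triggered --- but the ``therefore'' joining those two sentences is essentially the statement of the lemma and is left unjustified (one must know that the outcome of the peeling cascade is independent of the order in which support-deficient edges are removed, or some equivalent fact). Your trim-monotonicity lemma supplies exactly that missing ingredient: $\Tdel{k}{\{e\}}$ is the unique maximal subgraph of $\Truss{k}\setminus\{e\}$ in which every edge has support at least $k-2$ (uniqueness because the union of two such subgraphs is again one, since support only grows under edge addition), such maximal objects are monotone in the ambient graph, and $\Tdel{k}{\{e_2\}}\subseteq \Truss{k}\setminus\{e_1\}$ then forces $\Tdel{k}{\{e_2\}}\subseteq \Tdel{k}{\{e_1\}}$; complementing inside $\Edge{\Truss{k}}$ gives the claim. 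What your route buys is rigor and reusability (the same monotonicity argument yields the monotonicity half of Theorem~\ref{th:nonsub} and extends to deleting sets rather than single edges); what the paper's informal route buys is brevity, and your sketched ``operational'' peeling-simulation alternative is essentially a rigorous version of what the paper gestures at. One small nit: your parenthetical that inclusion of the deleted edge itself in $F(\cdot,\Truss{k})$ is irrelevant is not quite right --- under the convention $e\notin F(e,\Truss{k})$, the containment can fail on the single element $e_2$ when $e_2\in F(e_1,\Truss{k})$ (e.g.\ two edges in the same $k$-support group); this is a definitional triviality the paper also ignores, but it deserves a clause rather than a dismissal.
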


\begin{proof}
$e_1 \in F(e_2, \Truss{k})$, it implies that $e_1$ will be deleted during the deletion of $e_2$. Therefore, each edge in $F(e_1, \Truss{k})$ will be deleted when $e_2$ is deleted. Consequently, we have $F(e_1, \Truss{k}) \subseteq F(e_2, T_k)$.
\end{proof}

To further reduce the searching space, we introduce a pruning rule based on $k$-support group.

\begin{definition}[\textbf{$k$-support group}]
Given a k-truss $\Truss{k}$, a subgraph $S \subseteq \Truss{k}$ is a $k$-support group if it satisfies : 1) $\forall e$ $\in S$, $\Sup{e}{\Truss{k}} = k-2$. 2)
$\forall e_1, e_2 \in S$, suppose $e_1 \in \triangle_s$, $e_2 \in \triangle_t$. There exists a sequence of $\theta \geq 2$ triangles $\triangle_1, ..., \triangle_{\theta}$ with $\triangle_s = \triangle_1$, $\triangle_t = \triangle_{\theta}$. For $i \in [1,\theta)$, $\triangle_i $ $\cap \triangle_{i+1} = {e}$ and $\Sup{e}{\Truss{k}} = k-2$. 3) $S$ is maximal, i.e., any supergraph of $S$ cannot be a $k$-support group.
\end{definition}

Lemma~\ref{le:rule3} shows that edges in the same $k$-support group are equivalent. The deletion of any edge in a $k$-support group can lead to the deletion of the whole $k$-support group.

\begin{lemma}\label{le:rule3}
$S$ is a $k$-support group of $\Truss{k}$. For $\forall e \in S$, if we delete $e$, we can have $S$ deleted from $\Truss{k}$.
\end{lemma}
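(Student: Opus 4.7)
The plan is to show that deletion cascades along the triangle-connectivity chains guaranteed by the definition of a $k$-support group. Fix an arbitrary $e \in S$ to be deleted, and pick any other $e' \in S$. By the second clause of the definition, there is a triangle sequence $\triangle_1, \dots, \triangle_\theta$ with $e \in \triangle_1$ and $e' \in \triangle_\theta$ in which each pair of consecutive triangles $\triangle_i, \triangle_{i+1}$ shares a single edge whose support in $\Truss{k}$ is exactly $k-2$.

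I would then argue by induction on $i$ that after processing triangle $\triangle_i$, at least one edge of $\triangle_i$ has been removed from $\Truss{k}$. The base case $i=1$ is immediate since $e \in \triangle_1$ is the edge we deleted. For the inductive step, let $\tilde e = \triangle_i \cap \triangle_{i+1}$. If the edge already removed from $\triangle_i$ is $\tilde e$, then an edge of $\triangle_{i+1}$ is gone and we are done. Otherwise the missing edge kills $\triangle_i$ as a triangle, so $\Sup{\tilde e}{\cdot}$ drops by at least one; since $\Sup{\tilde e}{\Truss{k}} = k-2$ originally, the new support is at most $k-3$, forcing $\tilde e$ out of the truss, and again an edge of $\triangle_{i+1}$ is gone. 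Applying this to $i = \theta$, one edge of $\triangle_\theta$ has been removed. If that edge is $e'$ we are finished; otherwise $e' \in \triangle_\theta$ has just lost a containing triangle, and since $\Sup{e'}{\Truss{k}} = k-2$ by the first clause of the definition, $e'$ now has support at most $k-3$ and must leave the truss. Since $e'$ was arbitrary, the entire $S$ is deleted.

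The main subtlety I expect is handling the interaction of many cascades happening at once when the truss-maintenance procedure actually runs, since removals are global rather than chain-by-chain. This turns out not to cause trouble because support is monotone non-increasing under edge deletions: any cascade argued above only relies on the original supports in $\Truss{k}$ being exactly $k-2$, so whatever deletion order the truss-recomputation uses, once the starting edge $e$ is gone the threshold violations propagate at least as strongly as in the idealized chain argument. Hence the lemma follows uniformly over the order of updates.
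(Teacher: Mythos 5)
Your proof is correct and follows essentially the same route as the paper's: a cascading deletion along the triangle sequence guaranteed by the group definition, where each shared edge of support exactly $k-2$ loses a containing triangle and is forced out of the truss. Your version is actually more careful than the paper's (which only sketches the cascade in one sentence), since you make the induction along the chain explicit and note that monotonicity of support under deletion makes the argument independent of the order in which the truss-maintenance removes edges.
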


\begin{proof}
Since $S$ is a $k$-support group of $\Truss{k}$, for $\forall e, e' \in S$, suppose that $e \in \triangle_s, e' \in \triangle_t$, there exists a sequence of $\theta$ triangles $\triangle_1, ..., \triangle_{\theta}$ with $\triangle_s = \triangle_1, \triangle_t = \triangle_{\theta}$. For $i \in [1,\theta)$, $\triangle_i \cap \triangle_{i+1} = {e_i}$ and $\Sup{e_i}{\Truss{k}} = k-2$. The deletion of any edge inside the group will destroy the corresponding triangles and decrease the support of triangle adjacent edges by 1.
It will lead to a cascading deletion of subsequent triangle edges in the group due to the violation of truss constraint. Therefore, the lemma holds.
\end{proof}

According to Lemma~\ref{le:rule3}, we only need to add one edge from a $k$-support group to the candidate set, and the other edges in the group can be treated as the followers of the selected edge.
In the following lemma, we can further prune the edges that are adjacent with multiple edges in a $k$-support group.

\begin{lemma}\label{le:rule4}
Suppose that $e \in \Truss{k}$ and $\Sup{e}{\Truss{k}} = w > k-2$.
For a $k$-support group $S$, if $e$ belongs to more than $w - k + 2$ triangles, each of which contains at least one edge in $S$, then $e$ is a follower of $S$.
\end{lemma}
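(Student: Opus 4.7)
The plan is to show that once $S$ is removed from $\Truss{k}$, the support of $e$ falls strictly below $k-2$, which forces $e$ to be deleted during the cascading update and hence $e \in F(S, \Truss{k})$. By Lemma~\ref{le:rule3}, deleting any single edge of $S$ propagates to the removal of the entire $S$, so I may reason as if all edges of $S$ are simultaneously gone and then track the resulting support drop at $e$.

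Let $\mathcal{T}_e$ denote the set of triangles in $\Truss{k}$ that contain $e$ together with at least one edge of $S$; the hypothesis gives $|\mathcal{T}_e| > w-k+2$. Each element of $\mathcal{T}_e$ is a distinct containing-$e$-triangle and therefore contributes exactly one unit to $\Sup{e}{\Truss{k}}$. Moreover, every such triangle is destroyed the instant $S$ is removed, because at least one of its three edges lies in $S$. Consequently the support of $e$ decreases by at least $|\mathcal{T}_e|$, and the residual support is at most $w - |\mathcal{T}_e| < w - (w-k+2) = k-2$. Since $e$ then violates the $k$-truss condition, it is removed during the subsequent propagation, so $e \in F(S, \Truss{k})$ as claimed.

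The only subtlety I need to guard against is possible double counting when a single triangle of $e$ contains two edges from $S$: treating $\mathcal{T}_e$ as a set of triangles (rather than iterating over edges of $S$) immediately eliminates this issue, since such a triangle still contributes exactly one to both $\Sup{e}{\Truss{k}}$ and to the support loss. Beyond this bookkeeping point, the argument is a direct application of Lemma~\ref{le:rule3} and the definition of edge support, and I do not expect any further obstacle.
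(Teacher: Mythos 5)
Your proof is correct and follows essentially the same route as the paper's: invoke Lemma~\ref{le:rule3} to conclude the whole group $S$ is removed, count the destroyed containing-$e$-triangles to bound the support drop, and conclude $\Sup{e}{\cdot}$ falls strictly below $k-2$. Your explicit handling of the potential double-counting of triangles containing two edges of $S$ is a small clarification the paper leaves implicit, but the argument is otherwise identical.
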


\begin{proof}
According to Lemma~\ref{le:rule3}, by removing an edge from $S$, we have $S$ deleted from $\Truss{k}$. Since $e$ belongs to more than $w - k + 2$ triangles, each of which contains at least one edge in $S$, the support of $e$ will decrease by more than $w-k+2$ due to the deletion of $S$. So its support will be less than $k-2$ and it will be deleted due to the support constraint. Thus, $e$ is a follower of $S$.
\end{proof}

\subsection{Group Based Algorithm}

We improve the baseline algorithm by integrating all the pruning rules above,
and the details are shown in Algorithm~\ref{alg:improved}.
In each iteration, we first find $k$-support groups of current $\Truss{k}$ and compute the candidate set $T$ according to Lemma~\ref{le:rule3} (Line 4).
This process, \ie \textbf{FindGroup} function, corresponds to Line 12-19. It can be done by conducting
BFS search from edges in $\Truss{k}$. We use a hash table to maintain the group id (\ie gID) for each edge and the gID starts from 0 (Line 13). For each unvisited edge with support of $k-2$, we conduct
a BFS search from it by calling function \textbf{GroupExpansion} (Line 20-32).
During the BFS search, we visit the edges that are triangle adjacent with the current edge, and push the edges with support of $k-2$ into the queue if they are not visited (Line 25 and 28).
The edges, which are visited in the same BFS round, are marked with the current gID.
For the visited edges with support larger than $k-2$, we use a hash table to record its coverage with the current $k$-support group, and update the candidate set based on Lemma~\ref{le:rule4} (Line 31).
According to Lemma~\ref{le:rule2}, we can further update the candidate set after computing the followers of edges (Line 7).

\section{Upper Bound Based Solution}
\label{sec:upper}

The group based algorithm reduces the size of candidate set  by excluding the edges in the same $k$-support group and the followers of $k$-support groups, which greatly accelerates the baseline method.
However, for each candidate edge, we still need lots of computation to find its followers.
Given an edge, if we can obtain the upper bound of its follower size, then we can speed up the search
by pruning unpromising candidates. In this section, we present a novel method to efficiently
calculate the upper bound required.

\subsection{Upper Bound Derivation}

Before introducing the lemma, we first present some basic definitions. Recall that $\tau(e)$ denotes the trussness of $e$.

\begin{definition}[\textbf{$k$-triangle}]
A triangle $\triangle_{uvw}$ is a k-triangle, if the trussness of each edge is no less than $k$.
\end{definition}

\begin{definition}[\textbf{$k$-triangle connectivity}]
Two triangles $\triangle_{s}$ and $\triangle_{t}$ are $k$-triangle connected, denoted as $\triangle_s \stackrel{k}\leftrightarrow \triangle_t$, if there exists a sequence of $\theta \geq 2$ triangles $\triangle_1, ..., \triangle_{\theta}$ with $\triangle_s = \triangle_1, \triangle_t = \triangle_{\theta}$. For $i \in [1,\theta)$, $\triangle_i \cap \triangle_{i+1} = {e}$ and $\tau(e) = k$.
\end{definition}

We say two edges $e, e'$ are $k$-triangle connected, denoted as $e \stackrel{k} \leftrightarrow e'$, if and only if 1) $e$ and $e'$ belong to the same $k$-triangle, or 2) $e \in \triangle_s, e' \in \triangle_t$, with $\triangle_s \stackrel{k}\leftrightarrow \triangle_t$.

\begin{definition}[\textbf{$k$-truss group}]
Given a graph $G$ and an integer $k \geq 3$, a subgraph $S$ is a $k$-truss group if it satisfies: 1) $\forall e \in S, \tau(e) = k$. 2) $\forall e, e' \in S, e \stackrel{k}\leftrightarrow e'$. 3) $S$ is maximal, i.e., there is no supergraph of $S$ satisfying conditions 1 and 2.
\end{definition}

Based on the definition of $k$-truss group, Lemma~\ref{le:upper_bound} gives an upper bound of $|F(e, \Truss{k})|$.

\begin{lemma}\label{le:upper_bound}
If $e$ is triangle adjacent with $\theta$ k-truss groups $g_1$, $g_2,..., g_{\theta}$, we have $|F(e, \Truss{k})| \leq \sum\limits_{i=1}^{\theta} |\Edge{g_i}|$.
\end{lemma}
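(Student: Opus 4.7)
The plan is to establish the set containment $F(e, \Truss{k}) \subseteq \bigcup_{i=1}^{\theta} \Edge{g_i}$, from which the stated inequality follows immediately by a union bound. First I would argue that every follower $e' \in F(e, \Truss{k})$ must have trussness exactly $k$. Since $e' \in \Truss{k}$ we have $\tau(e') \geq k$. If $\tau(e') \geq k+1$ then $e' \in \Truss{k+1}$, where every edge has support at least $k-1$ within $\Truss{k+1}$. A straightforward induction on the chronological order of the cascade shows that only trussness-$k$ edges are ever removed: the induction hypothesis is that no trussness-$(\geq k+1)$ edge has been deleted so far, which leaves $\Truss{k+1}$ intact and hence guarantees that every edge in $\Truss{k+1}$ still has support at least $k-1 > k-2$. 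This contradicts $e' \in F(e, \Truss{k})$, so $\tau(e') = k$ and $e'$ belongs to some $k$-truss group $g$.

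Next I would show that any $k$-truss group $g$ containing a follower must be triangle adjacent to $e$, so that $g$ coincides with one of the $g_i$. Order the cascade deletions chronologically and let $f$ be the first edge of $g$ to be removed. The removal of $f$ is caused by the loss of support in some triangle $\triangle$ containing $f$, meaning one of the other two edges of $\triangle$ was deleted earlier. If that earlier edge is $e$ itself, then $\triangle$ certifies that $g$ is triangle adjacent to $e$ and we are done. Otherwise it is some previously cascade-deleted edge $e^*$, which by the first paragraph satisfies $\tau(e^*) = k$. The third edge of $\triangle$ lies in $\Truss{k}$ and so has trussness at least $k$, making $\triangle$ a $k$-triangle; by part 1 of the definition of $k$-triangle connectivity this gives $e^* \stackrel{k}{\leftrightarrow} f$. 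But then $e^*$ lies in the same $k$-truss group as $f$, namely $g$, contradicting the choice of $f$ as the earliest cascade-deleted member of $g$. Hence the triggering edge is $e$ and $g \in \{g_1, \ldots, g_\theta\}$, completing the containment and yielding the desired bound.

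The main obstacle is making the trussness invariant in the first paragraph fully rigorous, because the induction intertwines the cascade order with the definition of trussness computed on the \emph{original} graph $G$; once this invariant is in hand, the second paragraph runs cleanly, since every triangle relevant to the cascade automatically has all three edges of trussness at least $k$, and consecutive trussness-$k$ cascade deletions are therefore forced to share a $k$-truss group.
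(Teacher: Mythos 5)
Your proof is correct and reaches the same underlying conclusion as the paper's --- every follower of $e$ is confined to one of the $k$-truss groups triangle adjacent with $e$ --- but it gets there by a genuinely more self-contained route. The paper's proof is terse: it invokes the cited fact that deleting a single edge decreases any other edge's trussness by at most $1$ to dispose of adjacent edges with $\tau(e')>k$, and then simply asserts that an adjacent edge with $\tau(e')=k$ lying in $g_i$ contributes at most $|\Edge{g_i}|$, leaving implicit both why the cascade, once it enters $g_i$, never leaves $g_i$, and why a group that is not triangle adjacent with $e$ cannot be reached at all. Your second paragraph supplies exactly that missing step: taking the chronologically first deleted member $f$ of any group $g$ touched by the cascade, noting that the triangle whose destruction killed $f$ is a $k$-triangle, and observing that its earlier-deleted edge would otherwise be a trussness-$k$ edge $k$-triangle connected to $f$ (hence a member of $g$ deleted before $f$), you force the triggering edge to be $e$ itself. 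This minimal-counterexample argument is a cleaner justification of the containment $F(e,\Truss{k})\subseteq\bigcup_{i=1}^{\theta}\Edge{g_i}$ than anything the paper writes down.

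The one point to tighten is the one you flag yourself. Your induction in the first paragraph assumes that no edge of trussness at least $k+1$ has yet been deleted, but the very first deletion is $e$ itself, and nothing rules out $\tau(e)\geq k+1$. The fix is short: carry the invariant over $\Edge{\Truss{k+1}}\setminus\{e\}$ instead. Two distinct edges lie in at most one common triangle, so after removing $e$ every remaining edge of $\Truss{k+1}$ still has support at least $(k-1)-1=k-2$ within $\Truss{k+1}\setminus\{e\}$; the induction then shows this subgraph is never touched by the cascade, which is in effect a direct proof of the trussness-perturbation bound that the paper imports from prior work.
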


\begin{proof}
Suppose $\Sup{e}{\Truss{k}} = w$, we have $w \geq k-2$, so $e$ is contained by $w$ triangles and is triangle adjacent with $2w$ edges.
We divide the edges which are triangle adjacent with $e$ in $\Truss{k}$ into two parts. 1) $\tau(e') > k$.
Since the deletion of $e$ may cause $\tau(e')$ to decrease at most 1~\cite{DBLP:conf/sigmod/HuangCQTY14,DBLP:journals/pvldb/AkbasZ17}, we have $\tau(e') \geq k$ after deleting $e$, which means $e'$ has no contribution to $F(e, \Truss{k})$. 2) $\tau(e') = k$. Suppose $e' \in g_i$.
The deletion of $e$ can cause trussness of each edge in $g_i$ to decrease at most 1.
Then $e'$ can contribute to $|F(e, \Truss{k})|$ with at most $|\Edge{g_i}|$. Thus, $\sum\limits_{i=1}^{\theta} |\Edge{g_i}|$ is an upper bound of $|F(e, \Truss{k})|$.
\end{proof}

\subsection{Upper Bound Based Algorithm}

Based on Lemma~\ref{le:upper_bound}, we can skip the edges whose upper bound of follower size is less than the best edge in the current iteration. However, given the trussness of each edge, it may still be prohibitive to find the $k$-truss group that contains an edge $e$, since in the worst case we need to explore all the triangles in the graph. To compute the upper bound efficiently, we construct an index to maintain the relationships between edges and their $k$-truss groups.

To find the $k$-truss group for a given edge $e$, we extend the GroupExpansion function in Line 20-32 of Algorithm~\ref{alg:improved}. It also follows the BFS search manner. The difference is that
when we explore an adjacent triangle, it must satisfy the $k$-triangle constraint, and we only enqueue an edge, whose trussness satisfies $k$-triangle connectivity constraint. After finishing the BFS search starting from $e$, its involved $k$-truss groups can be found.

After deleting an edge $e$ in the current iteration, the constructed $k$-truss groups may be
changed. Therefore, we need to update the $k$-truss groups for the next iteration. The update algorithm
consists of two parts, \ie update the trussness and update the groups affected by the changed trussness.
To update the edge trussness, we apply the algorithm in~\cite{DBLP:conf/sigmod/HuangCQTY14}, which can
efficiently update the edge trussness after deleting an edge $e$.
Given the edges with changed trussness, we first find the subgraph induced by these edges. Then we reconstruct the $k$-truss groups for the induced subgraph
and update the original ones.
Based on the $k$-truss groups constructed, we can compute the upper bound of followers for edges efficiently. The final algorithm, named \textbf{UP-Edge}, integrates all the techniques proposed in Section~\ref{sec:framework} and~\ref{sec:upper}.

\section{Experiment}
\label{sec:experiment}

\subsection{Experiment Setting}

In the experiments, we implement and evaluate the following algorithms.
1) Exact: naive algorithm that enumerates all the combinations.
2) Support: in each iteration, it selects the edge
that is triangle adjacent with the edge with minimum support in the $k$-truss.
3) Baseline: baseline algorithm in Section~2.2.
4) GP-Edge: group based algorithm in Section~3.
5) UP-Edge: upper bound based algorithm in Section 4.

We employ 9 real social networks (\ie Bitcoin-alpha, Email-Eu-core, Facebook, Brightkite, Gowalla, DBLP, Youtube, Orkut, LiveJournal) to evaluate the performance of the proposed methods. The datasets are public available\footnote{\url{https://snap.stanford.edu/data/}, \url{https://dblp.org/xml/release/}}.
Since the Exact algorithm is too slow, we only run Exact algorithm on Email-Eu-core and Bitcoin-alpha dataset.

Since the properties of datasets are quite different, we set the default $k$ as 10 for 4 datasets (Gowalla, Youtube, Brightkite, DBLP) and set the default $k$ as 20 for 3 datasets (Facebook, LiveJournal, Orkut). We set default $b$ as 5 for all datasets.
All the programs are implemented in C++. All the experiments are performed on a machine with an Intel Xeon 2.20 GHz CPU and 128 GB memory running Linux.

\subsection{Effectiveness Evaluation}
To evaluate the effectiveness of the proposed methods, we report the number of followers by deleting $b$ edges.
Since UP-Edge only accelerates the speed of Baseline and GP-Edge,
we only report the results of UP-Edge here. Due to the huge time cost of Exact, we show the result on 3 datasets, that is, Bitcoin-alpha, Email-Eu-core and Artificial network (generated by GTGraph with 500 nodes and 5000 edges).

We set $k=11$ and $8$ for Bitcoin-alpha and Artificial network respectively, and vary $b$ from 1 to 4.
In Figure~\ref{eff:bitcoin}, we can see that there is only a slight drop when b=3. In Figure~\ref{eff:artificial}, there is only a small drop when b=4. In Figure~\ref{eff:email}, as we can see, UP-Edge also shows comparable
results with Exact and they all outperform Support significantly. Similar results can be observed
in Figure~\ref{eff:all}-\ref{eff:vary_k} over all the datasets and the selected datasets.
Figure~\ref{eff:vary_b} and~\ref{eff:vary_k} show the results on LiveJournal by varying $b$ and $k$. As observed, the number of followers for the two algorithms are positive correlated with $b$, and $k$ has a great impact on follower size.

\begin{figure}[t]
\begin{center}
	\subfigure[\small{Bitcoin-alpha (k=11)}]{
    	\includegraphics[width=0.46\columnwidth]{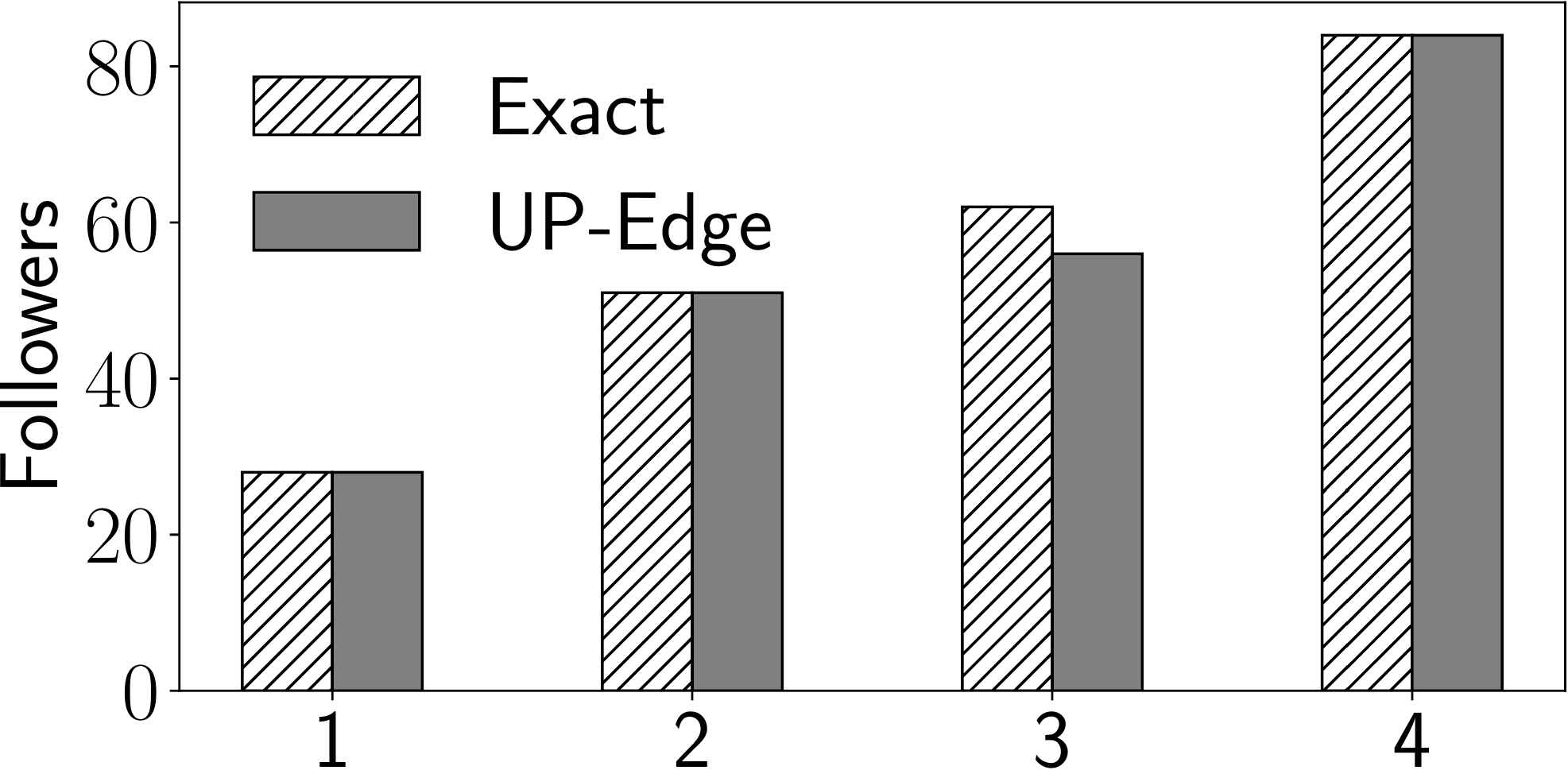}
    	\label{eff:bitcoin}
  	}
  \subfigure[\small{Artificial (k=8)}]{
    \includegraphics[width=0.47\columnwidth]{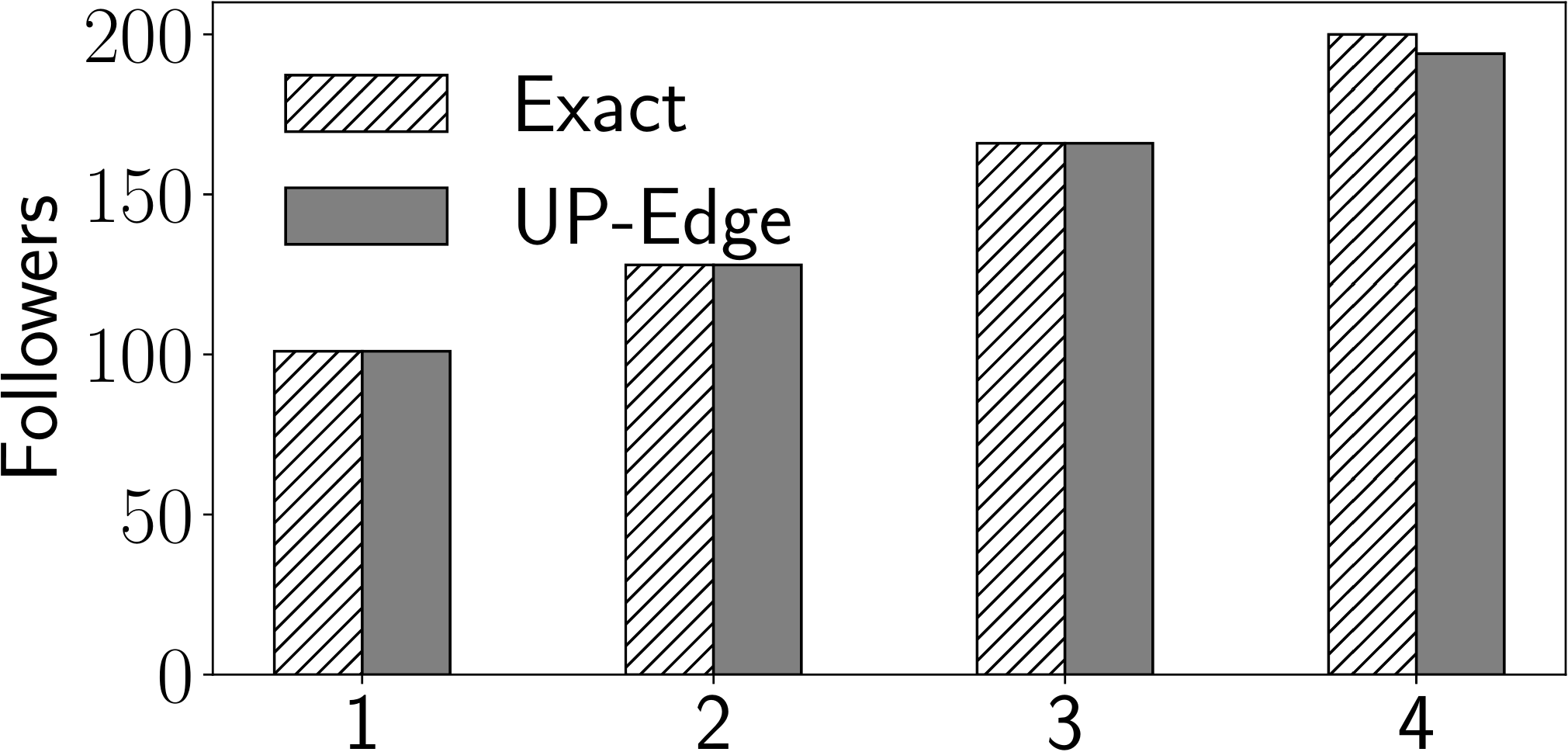}
    \label{eff:artificial}
  }

  \subfigure[\small{Email-Eu-core (b=2)}]{
    \includegraphics[width=0.47\columnwidth]{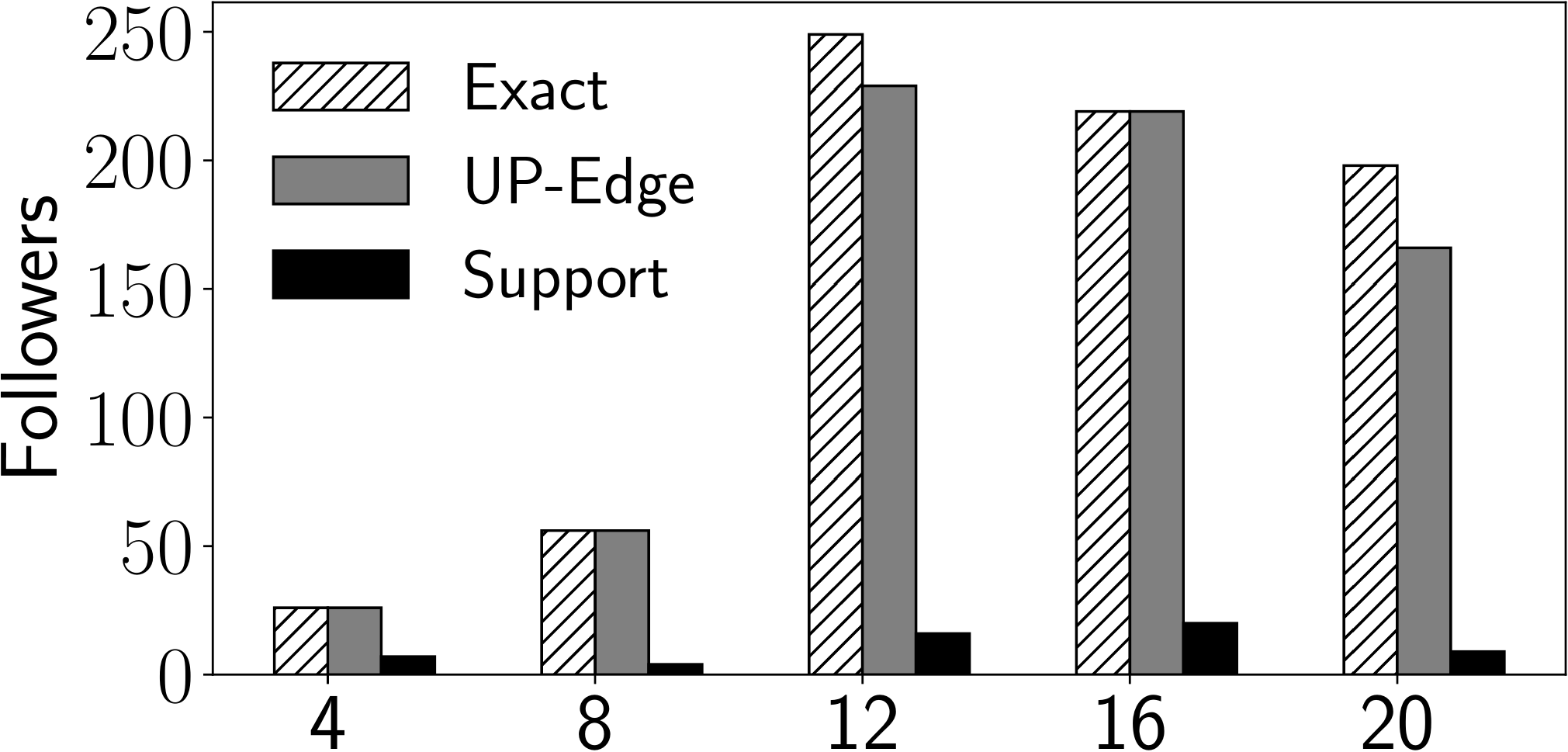}
    \label{eff:email}
  }
  \subfigure[\small{All datasets}]{
    \includegraphics[width=0.47\columnwidth]{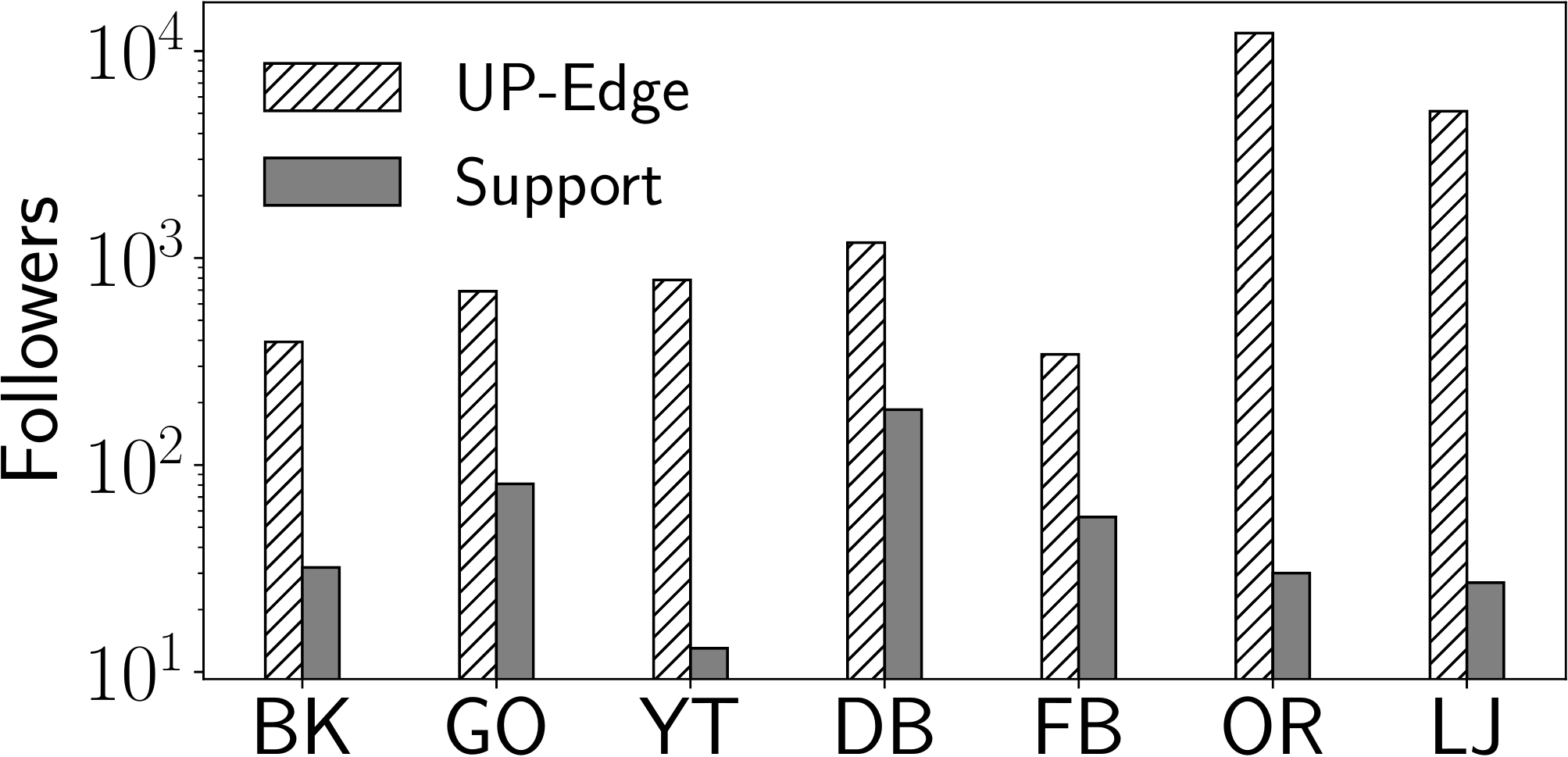}
    \label{eff:all}
  }
	\subfigure[\small{LiveJournal (vary $b$)}]{
    \includegraphics[width=0.47\columnwidth]{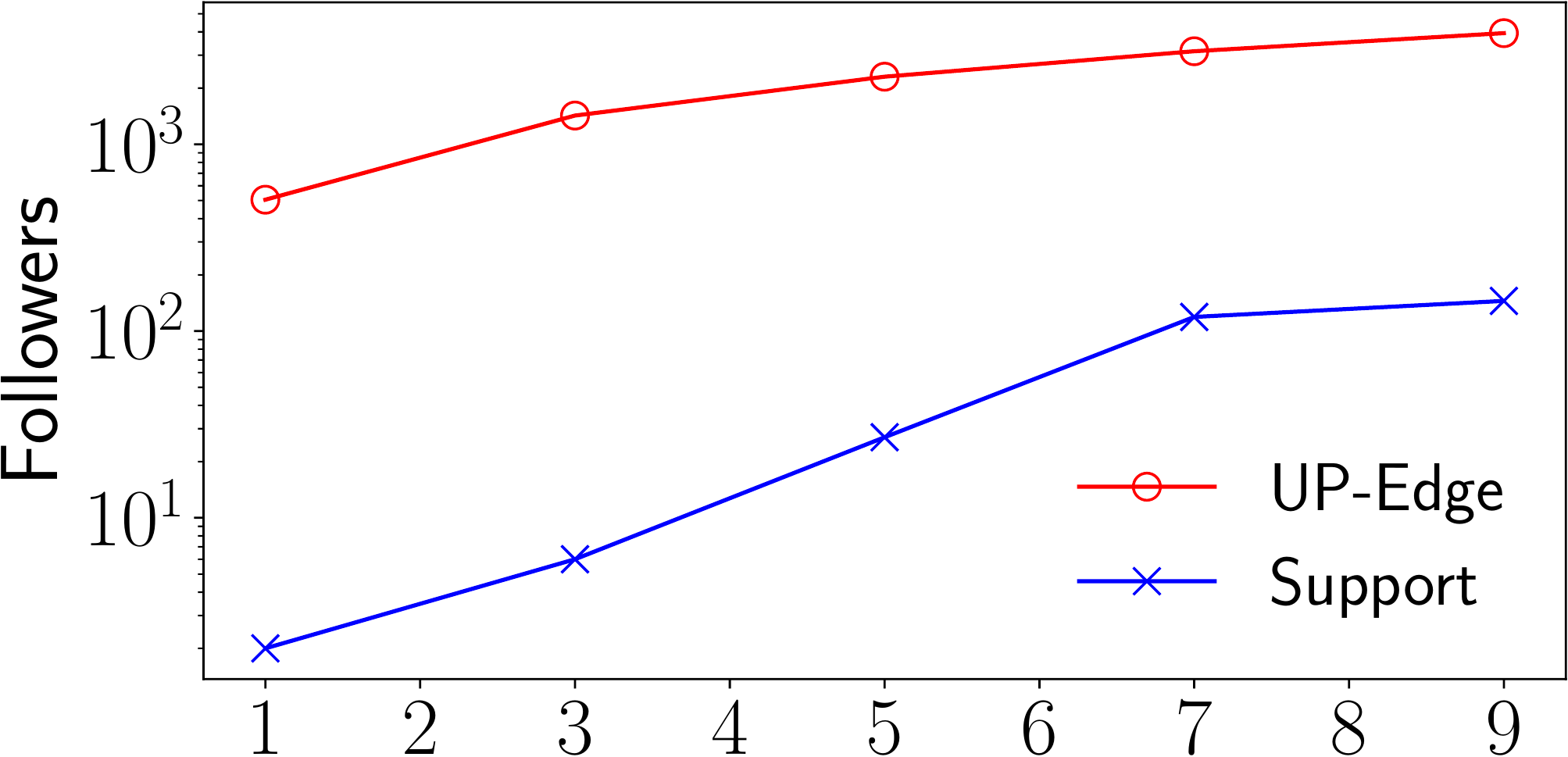}
    \label{eff:vary_b}
  }
  \subfigure[\small{LiveJournal (vary $k$)}]{
    \includegraphics[width=0.47\columnwidth]{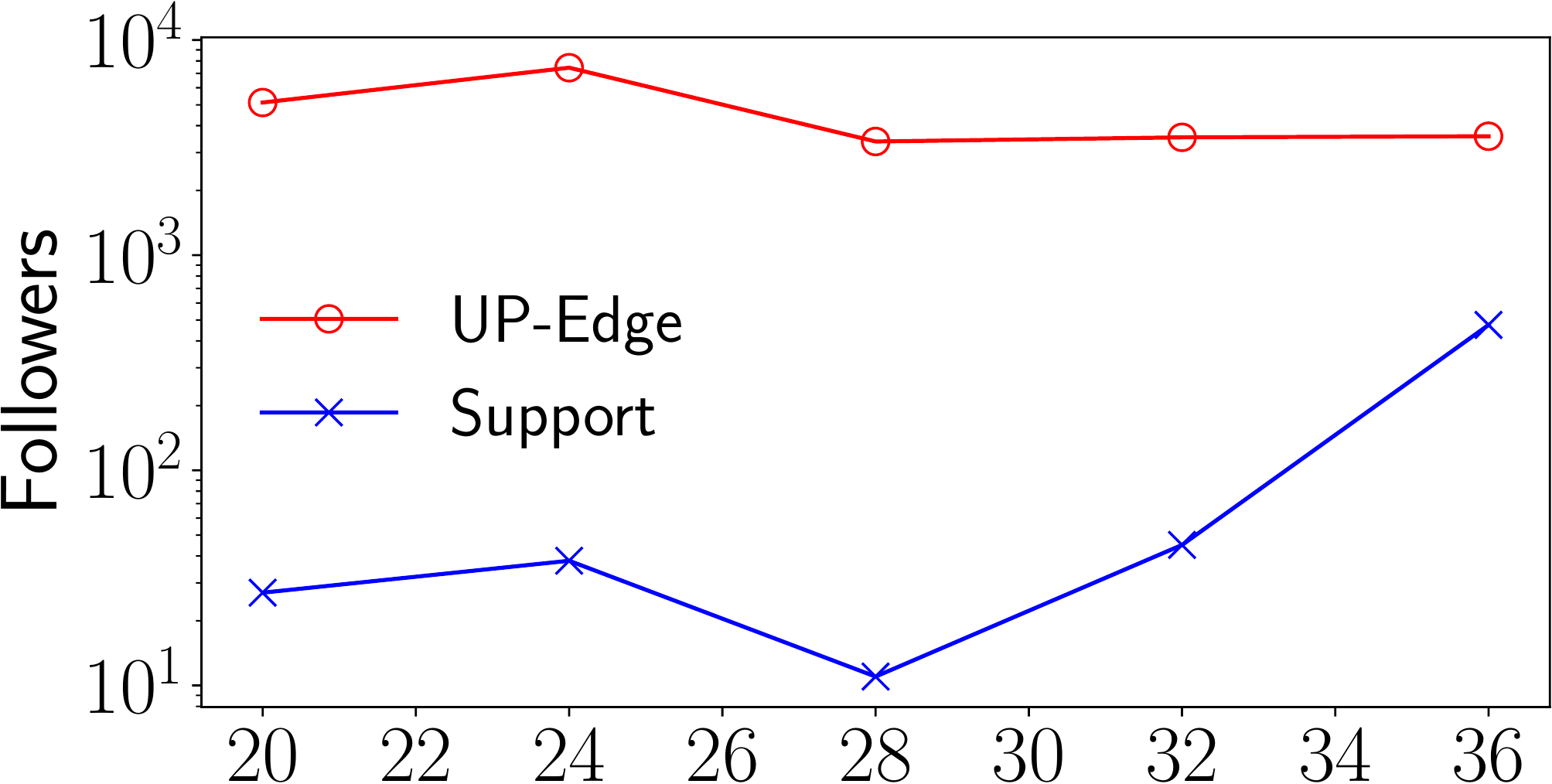}
    \label{eff:vary_k}
  }
 \end{center}
\caption{Effectiveness Evaluation}
\label{fig:followers}
\end{figure}

\begin{figure}[t]
\centering
\includegraphics[width=0.96\linewidth]{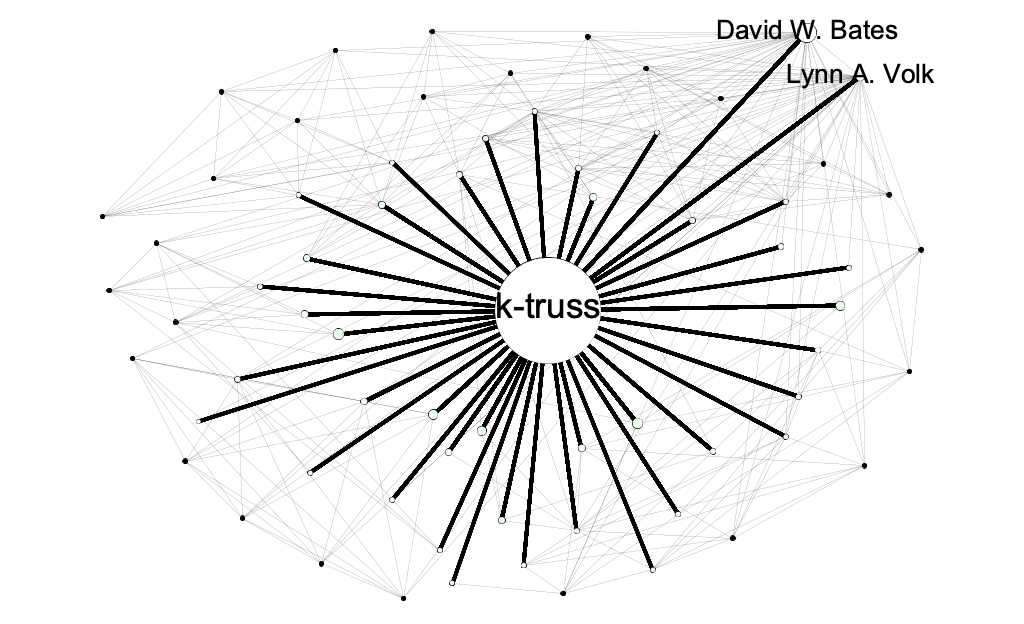}
\caption{Case study on DBLP, k=10, b=1}
\label{fig:case}
\end{figure}

Figure~\ref{fig:case} shows a case study on DBLP with $k=10, b=1$. We can see that the edge between Lynn A. Volk and David W. Bates is the most pivotal relationship. This edge has 264 followers (grey edges in the figure). It is interesting that most followers have no direct connection with them.

\subsection{Efficiency Evaluation}
To evaluate the efficiency, we compare the response time of UP-Edge and GP-Edge with Baseline .
We first conduct the experiments on all the datasets with default settings. Figure~\ref{runtime:alldata} shows the response time of the three algorithms.
We can see that UP-Edge and GP-Edge significantly outperform Baseline in all the datasets because of the pruning techniques developed.
UP-Edge is faster than GP-Edge due to the contribution of upper bound derived.
Figure~\ref{fig:runtime} shows the results conducted on LiveJournal by varying $b$ and $k$.
We can see that when $b$ grows, the response time increases since more edges need to be selected. When $k$ grows, the response time decreases since the searching space becomes smaller.

\begin{figure}[t]
\centering
\includegraphics[width=0.96\linewidth]{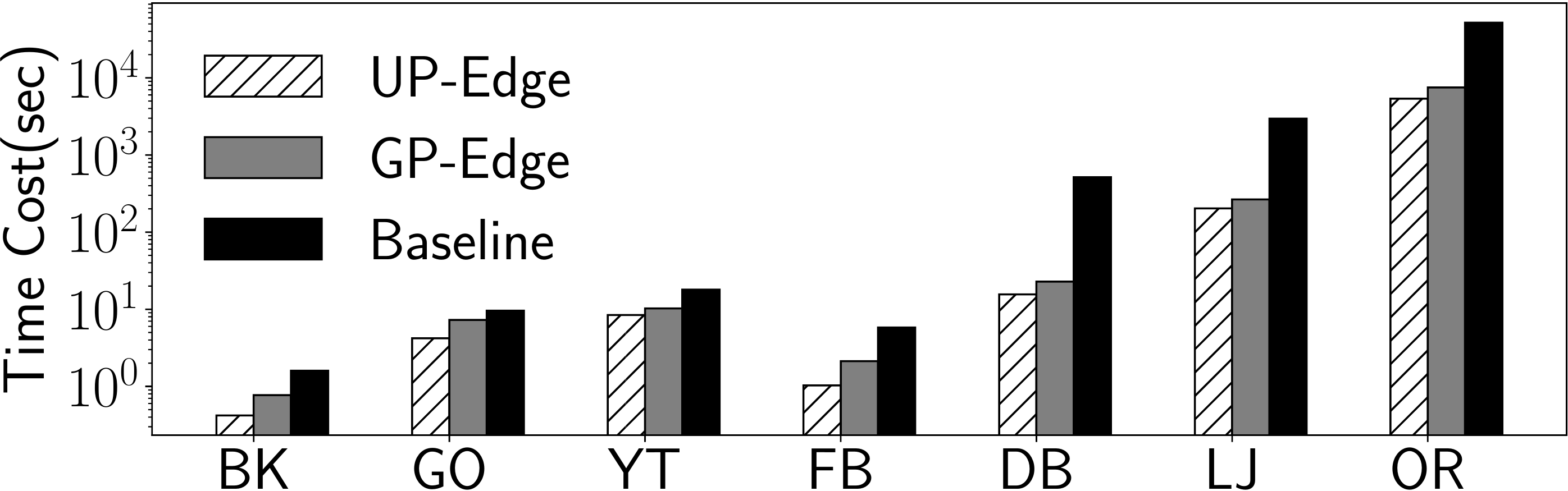}
\caption{Running time on all datasets}
\label{runtime:alldata}
\end{figure}

\begin{figure}[t]
\begin{center}
  \subfigure[\small{LiveJournal (vary $b$)}]{
    \includegraphics[width=0.47\columnwidth]{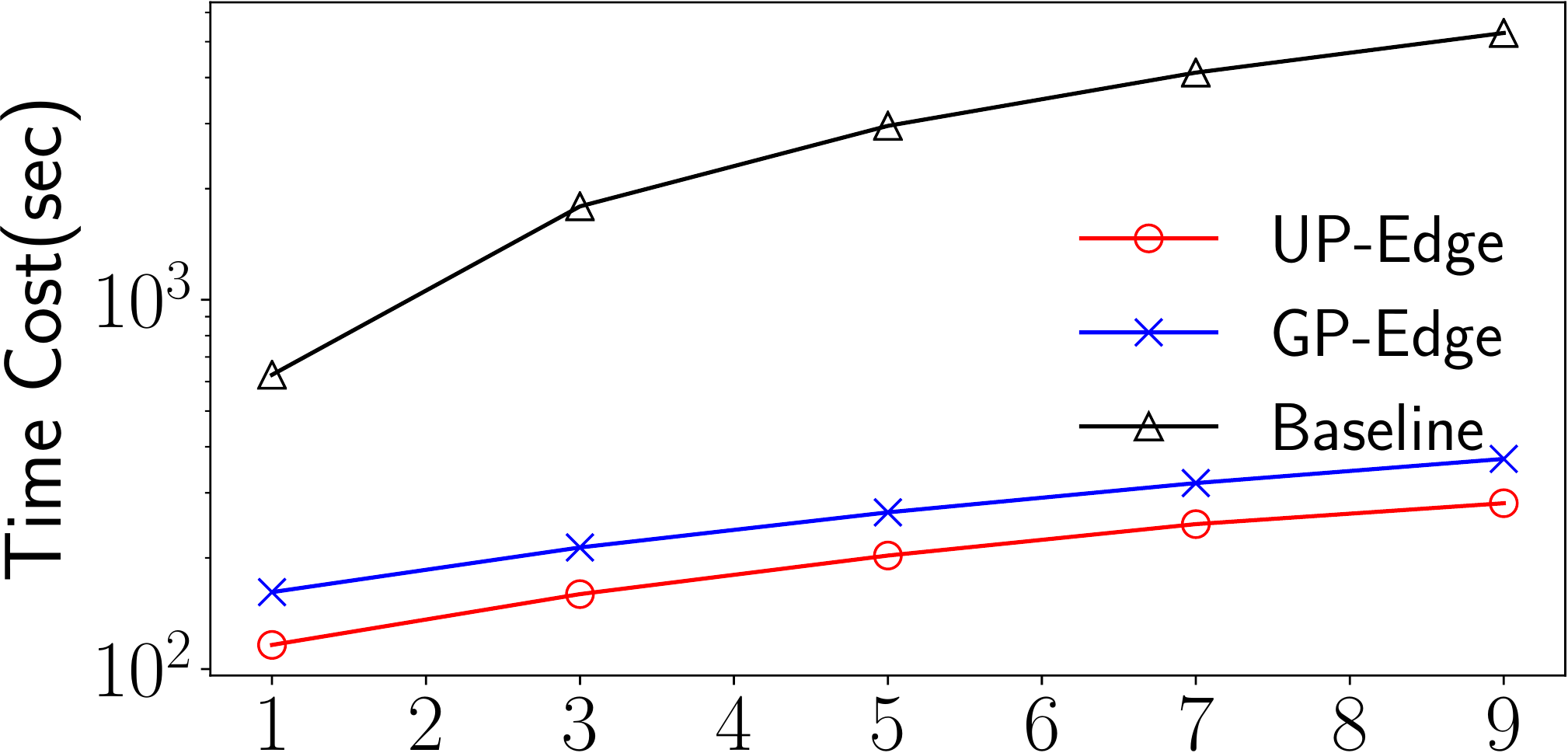}
    \label{runtime:livejournal_vary_b}
  }
  \subfigure[\small{LiveJournal (vary $k$)}]{
    \includegraphics[width=0.47\columnwidth]{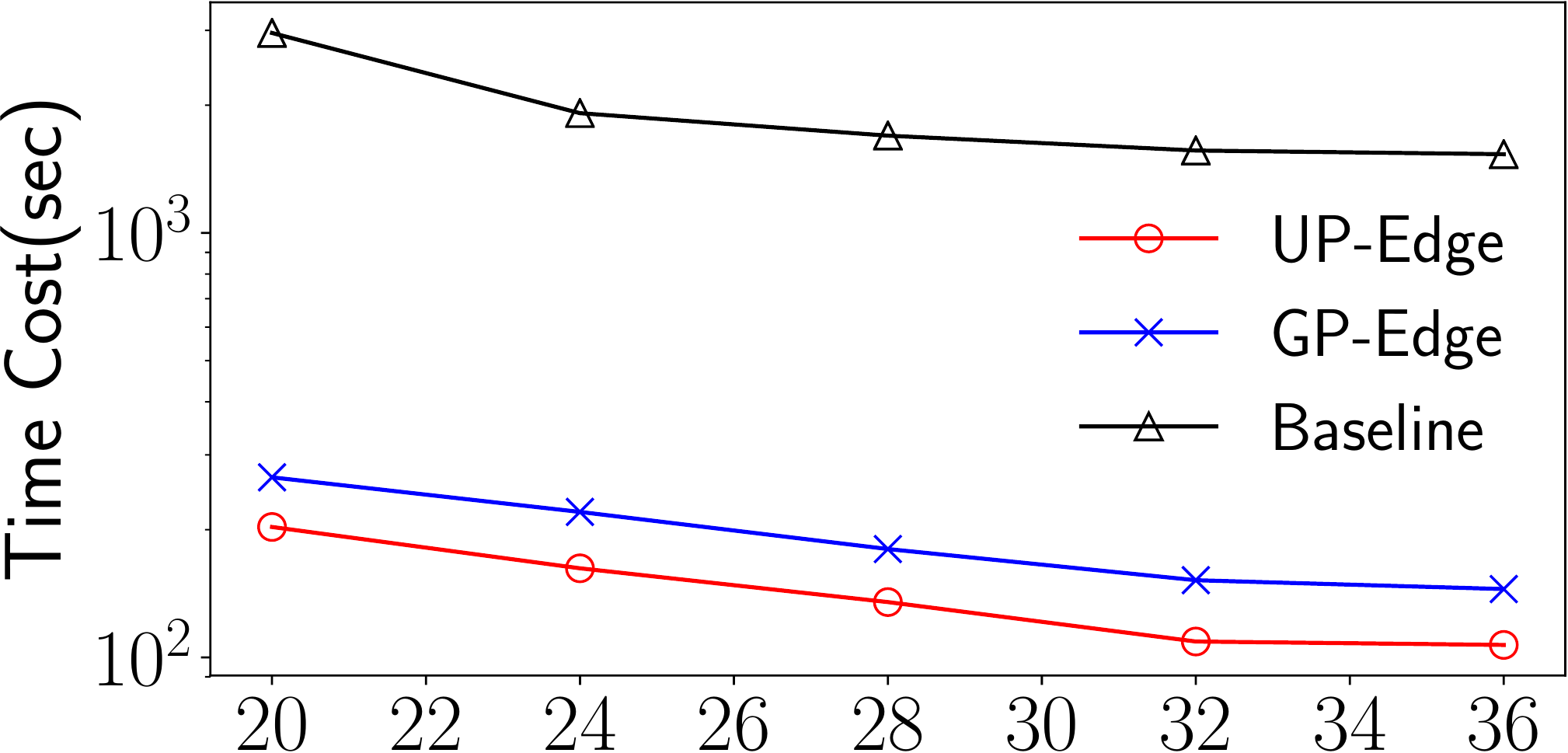}
    \label{runtime:livejournal_vary_k}
  }
\end{center}
\caption{Efficiency Evaluation}
\label{fig:runtime}
\end{figure}

\section{Related Work}
\label{sec:rel}

Graph processing has been a hot topic in many areas recently, which usually requires much more computation comparing with some traditional queries~\cite{luo2008spark,wang2010mapdupreducer,wang2015ap}.
Cohesive subgraph identification is of great importance to social network analysis.
In the literature, different definitions of cohesive subgraphs are proposed, such as $k$-core~\cite{Seidman1983Network,DBLP:conf/icde/WenQZLY16}, $k$-truss~\cite{Huang:2017:ACS:3099622.3099626}, clique~\cite{tsourakakis2013denser}, dense neighborhood graph~\cite{Kreutzer:2018:PKW:3292530.3274652}, etc. In the literature, numerous research is conducted to investigate the $k$-truss decomposition problem under different settings, including in-memory algorithms~\cite{cohen2008trusses}, external-memory algorithms~\cite{Wang:2012:TDM:2311906.2311909}, distributed algorithms~\cite{chen2014distributed}, etc. In some studies, authors leverage the $k$-truss property to mine required
communities~\cite{DBLP:conf/sigmod/HuangCQTY14,Huang:2017:ACS:3099622.3099626}. Huang \etal~\cite{Huang:2016:TDP:2882903.2882913} investigate the truss decomposition problem in uncertain graphs.
Recently, some research focuses on modifying the graph to maximize/minimize the corresponding 
metric~\cite{bhawalkar2015preventing,zhang2017finding,DBLP:conf/cikm/Zhu0WL18,medya2018group}.
Bhawalkar \etal~\cite{bhawalkar2015preventing} propose the anchored $k$-core problem, which tries
to maximize the $k$-core by anchoring $b$ nodes, while Zhang \etal~\cite{zhang2017finding} and Zhu \etal~\cite{DBLP:conf/cikm/Zhu0WL18} investigate the problem of $k$-core minimization by deleting nodes and edges, respectively. 
In~\cite{medya2018group}, Medya \etal try to maximize the node centrality by adding edges to the graph. 
However, these techniques cannot be extended for our problem.
\section{Conclusion}
\label{sec:conc}

In this paper, we study the \emph{k}-truss minimization problem. We first formally define the problem. Due to the hardness of the problem, a greedy baseline  algorithm is proposed. To speed up the search, different pruning techniques are developed. In addition,
an upper bound based strategy is presented by leveraging the $k$-truss group concept. 
Lastly, we conduct extensive experiments on real-world social networks to demonstrate the advantage of the proposed techniques.

\bibliographystyle{named}
\bibliography{ms}

\begin{thebibliography}{}

\bibitem[\protect\citeauthoryear{Akbas and
  Zhao}{2017}]{DBLP:journals/pvldb/AkbasZ17}
Esra Akbas and Peixiang Zhao.
\newblock Truss-based community search: a truss-equivalence based indexing
  approach.
\newblock {\em {PVLDB}}, 10(11):1298--1309, 2017.

\bibitem[\protect\citeauthoryear{Bhawalkar \bgroup \em et al.\egroup
  }{2015}]{bhawalkar2015preventing}
Kshipra Bhawalkar, Jon Kleinberg, Kevin Lewi, Tim Roughgarden, and Aneesh
  Sharma.
\newblock Preventing unraveling in social networks: the anchored k-core
  problem.
\newblock {\em SIAM Journal on Discrete Mathematics}, 29(3):1452--1475, 2015.

\bibitem[\protect\citeauthoryear{Chen \bgroup \em et al.\egroup
  }{2014}]{chen2014distributed}
Pei-Ling Chen, Chung-Kuang Chou, and Ming-Syan Chen.
\newblock Distributed algorithms for k-truss decomposition.
\newblock In {\em IEEE International Conference on Big Data}, 2014.

\bibitem[\protect\citeauthoryear{Cohen}{2008}]{cohen2008trusses}
Jonathan Cohen.
\newblock Trusses: Cohesive subgraphs for social network analysis.
\newblock {\em National Security Agency Technical Report}, 2008.

\bibitem[\protect\citeauthoryear{Cui \bgroup \em et al.\egroup
  }{2018}]{cui2018efficient}
Yi~Cui, Di~Xiao, and Dmitri Loguinov.
\newblock On efficient external-memory triangle listing.
\newblock {\em {TKDE}}, 2018.

\bibitem[\protect\citeauthoryear{Huang and
  Lakshmanan}{2017}]{Huang:2017:ACS:3099622.3099626}
Xin Huang and Laks V.~S. Lakshmanan.
\newblock Attribute-driven community search.
\newblock {\em {PVLDB}}, 2017.

\bibitem[\protect\citeauthoryear{Huang \bgroup \em et al.\egroup
  }{2014}]{DBLP:conf/sigmod/HuangCQTY14}
Xin Huang, Hong Cheng, Lu~Qin, Wentao Tian, and Jeffrey~Xu Yu.
\newblock Querying k-truss community in large and dynamic graphs.
\newblock In {\em {SIGMOD}}, pages 1311--1322, 2014.

\bibitem[\protect\citeauthoryear{Huang \bgroup \em et al.\egroup
  }{2016}]{Huang:2016:TDP:2882903.2882913}
Xin Huang, Wei Lu, and Laks~V.S. Lakshmanan.
\newblock Truss decomposition of probabilistic graphs: Semantics and
  algorithms.
\newblock In {\em {SIGMOD}}, 2016.

\bibitem[\protect\citeauthoryear{Karp}{1972}]{DBLP:conf/coco/Karp72}
Richard~M. Karp.
\newblock Reducibility among combinatorial problems.
\newblock In {\em Complexity of Computer Computations}, pages 85--103, 1972.

\bibitem[\protect\citeauthoryear{Kreutzer \bgroup \em et al.\egroup
  }{2018}]{Kreutzer:2018:PKW:3292530.3274652}
Stephan Kreutzer, Roman Rabinovich, and Sebastian Siebertz.
\newblock Polynomial kernels and wideness properties of nowhere dense graph
  classes.
\newblock {\em ACM Trans. Algorithms}, 15(2), 2018.

\bibitem[\protect\citeauthoryear{Luo \bgroup \em et al.\egroup
  }{2008}]{luo2008spark}
Yi~Luo, Wei Wang, and Xuemin Lin.
\newblock Spark: A keyword search engine on relational databases.
\newblock In {\em {ICDE}}, 2008.

\bibitem[\protect\citeauthoryear{Medya \bgroup \em et al.\egroup
  }{2018}]{medya2018group}
Sourav Medya, Arlei Silva, Ambuj Singh, Prithwish Basu, and Ananthram Swami.
\newblock Group centrality maximization via network design.
\newblock In {\em {ICDM}}, 2018.

\bibitem[\protect\citeauthoryear{Seidman}{1983}]{Seidman1983Network}
Stephen~B. Seidman.
\newblock Network structure and minimum degree.
\newblock {\em Social Networks}, 5(3):269--287, 1983.

\bibitem[\protect\citeauthoryear{Tsourakakis \bgroup \em et al.\egroup
  }{2013}]{tsourakakis2013denser}
Charalampos Tsourakakis, Francesco Bonchi, Aristides Gionis, Francesco Gullo,
  and Maria Tsiarli.
\newblock Denser than the densest subgraph: extracting optimal quasi-cliques
  with quality guarantees.
\newblock In {\em {KDD}}, 2013.

\bibitem[\protect\citeauthoryear{Wang and
  Cheng}{2012}]{Wang:2012:TDM:2311906.2311909}
Jia Wang and James Cheng.
\newblock Truss decomposition in massive networks.
\newblock {\em Proc. VLDB Endow.}, 2012.

\bibitem[\protect\citeauthoryear{Wang \bgroup \em et al.\egroup
  }{2010}]{wang2010mapdupreducer}
Chaokun Wang, Jianmin Wang, Xuemin Lin, Wei Wang, Haixun Wang, Hongsong Li,
  Wanpeng Tian, Jun Xu, and Rui Li.
\newblock Mapdupreducer: detecting near duplicates over massive datasets.
\newblock In {\em {SIGMOD}}, 2010.

\bibitem[\protect\citeauthoryear{Wang \bgroup \em et al.\egroup
  }{2015}]{wang2015ap}
Xiang Wang, Ying Zhang, Wenjie Zhang, Xuemin Lin, and Wei Wang.
\newblock Ap-tree: Efficiently support continuous spatial-keyword queries over
  stream.
\newblock In {\em {ICDE}}, 2015.

\bibitem[\protect\citeauthoryear{Wen \bgroup \em et al.\egroup
  }{2016}]{DBLP:conf/icde/WenQZLY16}
Dong Wen, Lu~Qin, Ying Zhang, Xuemin Lin, and Jeffrey~Xu Yu.
\newblock {I/O} efficient core graph decomposition at web scale.
\newblock In {\em {ICDE}}, 2016.

\bibitem[\protect\citeauthoryear{Xiao \bgroup \em et al.\egroup
  }{2017}]{xiao2017asymptotic}
Di~Xiao, Yi~Cui, Daren~BH Cline, and Dmitri Loguinov.
\newblock On asymptotic cost of triangle listing in random graphs.
\newblock In {\em {PODS}}, 2017.

\bibitem[\protect\citeauthoryear{Yu \bgroup \em et al.\egroup
  }{2013}]{yu2013more}
Weiren Yu, Xuemin Lin, Wenjie Zhang, Lijun Chang, and Jian Pei.
\newblock More is simpler: Effectively and efficiently assessing node-pair
  similarities based on hyperlinks.
\newblock {\em PVLDB}, 7(1), 2013.

\bibitem[\protect\citeauthoryear{Zhang \bgroup \em et al.\egroup
  }{2017}]{zhang2017finding}
Fan Zhang, Ying Zhang, Lu~Qin, Wenjie Zhang, and Xuemin Lin.
\newblock Finding critical users for social network engagement: The collapsed
  k-core problem.
\newblock In {\em {AAAI}}, 2017.

\bibitem[\protect\citeauthoryear{Zhu \bgroup \em et al.\egroup
  }{2018}]{DBLP:conf/cikm/Zhu0WL18}
Weijie Zhu, Chen Chen, Xiaoyang Wang, and Xuemin Lin.
\newblock K-core minimization: An edge manipulation approach.
\newblock In {\em {CIKM}}, 2018.

\end{thebibliography}
\end{document}